\title{Finding Detours is Fixed-parameter Tractable\normalfont\footnote{%
    Extended abstract appears at ICALP 2017.
    Most of this work was done while the authors were visiting the Simons Institute for the Theory of Computing.
    IB is supported by NSF grant CCF-1319987.
    RC is supported by ERC grant PARAMTIGHT (No. 280152).%
}}
\author[1]{Ivona Bez\'akov\'a}
\author[2]{Radu Curticapean}
\author[3]{Holger Dell}
\author[4]{Fedor~V.~Fomin}
\affil[1]{Department of Computer Science, Rochester Institute of Technology, Rochester, NY, U.S.A.,
  \texttt{ib@cs.rit.edu}}
\affil[2]{Institute for Computer Science and Control, Hungarian Academy of Sciences (MTA SZTAKI), Budapest, Hungary,
  \texttt{radu.curticapean@gmail.com}}
\affil[3]{Saarland University and Cluster of Excellence, MMCI, Saarbr\"ucken, Germany,
  \texttt{hdell@mmci.uni-saarland.de}}
\affil[4]{University of Bergen, Bergen, Norway,
  \texttt{fomin@ii.uib.no}}
\newcommand{\probKPath}{\textsc{Longest Path}\xspace}
\newcommand{\probEXKPath}{\textsc{Exact Path}\xspace}
\newcommand{\probEXDet}{\textsc{Exact Detour}\xspace}
\newcommand{\probLongDet}{\textsc{Longest Detour}\xspace}
\newcommand{\dist}{d}
\newcommand{\defparproblem}[4]{
  \vspace{1mm}
\noindent\fbox{
  \begin{minipage}{0.96\textwidth}
  \begin{tabular*}{\textwidth}{@{\extracolsep{\fill}}lr} #1  &
    {\textbf{Parameter:}} #3 \\ \end{tabular*}
  {\textbf{Input:}} #2  \\
  {\textbf{Task:}} #4
  \end{minipage}
  }
  \vspace{1mm}
}
\newcommand{\op}[1]{\ensuremath{\operatorname{#1}}}
\newcommand{\poly}{\op{poly}}
\newcommand{\tw}{\op{tw}}
\newcommand{\Oh}{{O}}
\newcommand{\N}{\mathbf{N}}
\newcommand{\Q}{\mathbf{Q}}
\global\long\def\relevantpart#1#2#3{#1_{#2,#3}}
\global\long\def\subdivtetra#1{K_{4}^{\geq#1}}
\newtheorem{theorem}{Theorem}
\newtheorem{lemma}[theorem]{Lemma}
\newtheorem{proposition}[theorem]{Proposition}
\newtheorem{corollary}[theorem]{Corollary}
\theoremstyle{plain}
\newtheorem{definition}[theorem]{Definition}
\theoremstyle{nonumberplain}
\newtheorem{proof}{Proof}
\DeclarePairedDelimiter\paren{\lparen}{\rparen}
\DeclarePairedDelimiter\abs{\lvert}{\rvert}
\DeclarePairedDelimiter\set{\{}{\}}
\newenvironment{algor}[3]{%
  \bigskip\hrule\nopagebreak[4]\smallskip\nopagebreak[4]
\noindent\textbf{Algorithm #1} ({\itshape#2\/}) {\itshape#3}
\begin{description}
\smallskip
\setlength{\itemsep}{0pt}\leftmargin=0pt}{%
\end{description}\nopagebreak[4]
  \smallskip\hrule\nopagebreak[4]\medskip\nopagebreak[4]}
\begin{document}

\maketitle

\begin{abstract}
We consider the following natural ``above guarantee'' parameterization of the classical \probKPath problem:
For given vertices $s$ and $t$ of a graph $G$, and an integer~$k$, the problem \probLongDet asks for an $(s,t)$-path in $G$ that is at least~$k$ longer than a shortest $(s,t)$-path.
Using insights into structural graph theory, we prove that \probLongDet is fixed-parameter tractable (FPT) on undirected graphs and actually even admits a single-exponential algorithm, that is, one of running time $\exp(O(k)) \cdot \poly(n)$. This matches (up to the base of the exponential) the best algorithms for finding a path of length at least $k$.

Furthermore, we study the related problem \probEXDet that asks whether a graph $G$ contains an
$(s,t)$-path that is exactly $k$ longer than a shortest $(s,t)$-path.
For this problem, we obtain a randomized algorithm with running time about~$2.746^k\cdot\poly(n)$, and a
deterministic algorithm with running time about $6.745^{k}\cdot\poly(n)$, showing that this problem is FPT as well. Our algorithms for \probEXDet apply to both undirected and directed graphs.
\end{abstract}

\section{Introduction}

The \probKPath problem asks, given an undirected $n$-vertex graph $G$ and an integer~$k$, to decide whether~$G$ contains a path of length at least $k$, that is, a self-avoiding walk with at least~$k$ edges.
This problem is a natural generalization of the classical NP-complete
\textsc{Hamiltonian Path} problem, and the parameterized complexity community
has paid exceptional attention to it.
For instance, Monien~\cite{Monien85} and Bodlaender \cite{Bodlaender93a} showed
\emph{avant la lettre} that \probKPath is fixed-parameter tractable with
parameter~$k$ and admits algorithms with running time $2^{\Oh(k\log
  k)}n^{\Oh(1)}$.
This led Papadimitriou and Yannakakis
\cite{PapadimitriouY96} to conjecture that \probKPath is solvable in polynomial
time for $k=\log{n}$, and indeed, this conjecture was resolved in a seminal paper of Alon,
Yuster, and Zwick~\cite{AlonYZ}, who introduced the method of {color coding} and derived from it the first algorithm with running time $2^{\Oh(k)}n$.  Since this breakthrough of Alon et al.~\cite{AlonYZ}, the problem \probKPath occupied a central place in parameterized algorithmics, and several novel
approaches were developed in order to  reduce the base of the exponent in  the
running time
\cite{HuffnerWZ08,KneisMRR06,ChenLSZ07,ChenKLMR09,Koutis08,Williams09,FominLS14,FominLS14,Bjorklund2017119}.
We refer to  two review articles in Communications of ACM
\cite{FominK13,KoutisW16} as well as to the textbook
\cite[Chapter~10]{cygan2015parameterized} for an extensive overview of
parameterized   algorithms for \probKPath.  Let us however note that the fastest known randomized
algorithm for \probKPath is due to Björklund et al.~\cite{Bjorklund2017119} and runs
in time $1.657^k \cdot n^{\Oh(1)} $, whereas the fastest known deterministic algorithm
is due to Zehavi~\cite{Zehavi14} and runs in time $2.597^k \cdot n^{\Oh(1)}$.

In the present paper, we study the problem \probKPath from the perspective of an
``above guarantee'' parameterization that can attain small values even for long
paths: For a  pair of vertices $s,t\in V(G)$, we use
$\dist_G(s,t)$ to denote the distance, that is, the length of a shortest path
from $s$ to $t$. We then ask for an $(s,t)$-path of length at least
$d_G(s,t)+k$, and we parameterize by this offset $k$ rather than the actual
length of the path to obtain the problem \probLongDet. In other words, the first
$d_G(s,t)$ steps on a path sought by \probLongDet are complimentary and will not
be counted towards the parameter value. This reflects the fact that shortest
paths can be found in polynomial time and could (somewhat embarrassingly) be
much better solutions for \probKPath than the paths of logarithmic length found
by algorithms that parameterize by the path length.

We study two variants of the detour problem, one asking for a detour of length at least~$k$, and another asking for a detour of length exactly~$k$.

\smallskip

\defparproblem{\probLongDet}{Graph $G$, vertices $s,t\in V(G)$, and integer
  $k$.}{$k$}{Decide whether there is an $(s,t)$-path in $G$ of length at least
  $\dist_G(s,t) +k$.}

\smallskip

\defparproblem{\probEXDet}{Graph $G$, vertices $s,t\in V(G)$, and integer
  $k$.}{$k$}{Decide whether there is an $(s,t)$-path in $G$ of length exactly
  $\dist_G(s,t) +k$.}

\smallskip

Our parameterization above the length of a shortest path is a new example in the
general paradigm of ``above guarantee'' parameterizations, which was introduced
by Mahajan and Raman~\cite{MahajanR99}.
Their approach was successfully applied to various problems, such as finding
independent sets in planar graphs (where an independent set of size at least
$\frac n4$ is guaranteed to exist by the Four Color Theorem), or the maximum cut
problem, see e.g.
\cite{DBLP:journals/algorithmica/AlonGKSY11,CrowstonJMPRS13,GutinIMY12,DBLP:journals/mst/GutinKLM11,MahajanRS09}.

\subsection*{Our results}
We show the following tractability results for \probLongDet and \probEXDet:
\begin{itemize}
  \item
    \probLongDet is fixed-parameter tractable (FPT) on undirected graphs.  The
    running time of our algorithm is single-exponential, i.e., it is of the type $2^{\Oh(k)}\cdot n^{\Oh(1)}$
    and thus asymptotically matches the running time of algorithms for
    \probKPath. Our approach requires a non-trivial argument in graph structure
    theory to obtain the single-exponential algorithm; a mere
    FPT-algorithm could be achieved with somewhat less effort. It should also be
    noted that a straightforward reduction rules out a running time of
    $2^{o(k)}\cdot n^{\Oh(1)}$ unless the exponential-time hypothesis of
    Impagliazzo and Paturi~\cite{IP01} fails.
  \item
    \probEXDet is FPT on directed and undirected graphs. Actually, we give a
    polynomial-time Turing reduction from  \probEXDet to the standard parameterization of \probKPath,
    in which we ask on input $u,v$ and $k \in \N$ whether there is a $(u,v)$-path
    of length~$k$. This reduction only makes queries to instances with
    parameter at most~${2k+1}$.
    Pipelined with the fastest known algorithms for  \probKPath mentioned above,
    this implies that  \probEXDet admits  a bounded-error randomized algorithm
    with running time~$2.746^kn^{\Oh(1)}$, and a deterministic algorithm with
    running time~${6.745^{k} n^{\Oh(1)}}$.
\end{itemize}

By a self-reducibility argument, we also show how to construct the required paths rather than just detect their existence. This reduction incurs only polynomial overhead.

\subsection*{Techniques}
The main idea behind the algorithm for  \probLongDet is the following
combinatorial theorem, which shows the existence of specific large planar minors in large-treewidth graphs while circumventing the full machinery used in the Excluded Grid Theorem \cite{RobertsonS-V}. Although the Excluded Grid Theorem already shows that graphs of sufficiently large treewidth contain arbitrary fixed planar graphs, resorting to more basic techniques allows us to show that linear treewidth suffices for our specific cases.
More specifically, we show that there
exists a global constant $c\in\N$ such that every graph of treewidth at least
$c\cdot k$ contains as a subgraph a copy of a graph $\subdivtetra k$, which is any graph obtained from the complete graph~$K_4$ by replacing every edge by a path with at least~$k$ edges.  The proof of this result is based on the
structural theorems of Leaf and Seymour \cite{LeafS15} and Raymond and Thilikos \cite{RaymondT16}.

With the combinatorial theorem at hand, we implement the following win/win
approach: If the treewidth of the input graph is less than $c \cdot k$, we use
known algorithms \cite{rank-treewidth,FominLS14} to solve the problem in single-exponential time. Otherwise
the treewidth of the input graph is at least $c \cdot k$ and there must be a
$\subdivtetra k$, which we use to argue that any path visiting the same two-connected component as $\subdivtetra k$ can be
prolonged by rerouting it through $\subdivtetra k$. To this end, we set up a fixed system of linear inequalities corresponding to the possible paths in $\subdivtetra k$ such that rerouting is possible if and only if the system is unsatisfiable. We then verify the unsatisfiability of this fixed system by means of a computer-aided proof (more specifically, a linear programming solver). From LP duality, we also obtain a short certificate for the unsatisfiability, which we include in the appendix.

The algorithm for \probEXDet is based on the following idea.  We run
breadth-first search (BFS) from vertex $v$ to vertex $u$. Then, for every
$(u,v)$-path $P$ of length $\dist_G(u,v)+k$, all but at most $k$ levels of the
BFS-tree contain exactly one vertex of $P$. Using this property, we are able to
devise a dynamic programming algorithm for \probEXDet, provided it is given access to an
oracle for \probKPath.

\medskip
The remaining part of the paper is organized as follows:
\S\ref{sec:Prelim} contains definitions and preliminary results used in the technical part of the paper.
In \S\ref{sec:winwin}, we give an algorithm for   \probLongDet while \S\ref{sec:DP} is devoted to \probEXDet.
We provide a search-to-decision reduction for \probLongDet and \probEXDet
in \S\ref{sec:searchtodecision}.
In \S\ref{sec: LP unsatisfiable}, we give short certificates for the unsatisfiability of the linear programs from \S\ref{sec:winwin}.

\section{Preliminaries}\label{sec:Prelim}
We consider graphs~$G$ to be undirected, and we denote by~$uv$ an undirected
edge joining vertices $u,v\in V(G)$.
A \emph{path} is a self-avoiding walk in~$G$; the \emph{length} of the path is
its number of edges.
An $(s,t)$-path for $s,t\in V(G)$ is a path that starts at~$s$ and ends at~$t$.
We allow paths to have length~$0$, in which case $s=t$ holds.
For a vertex set $X\subseteq V(G)$, denote by~$G[X]$ the subgraph induced by $X$.

\subparagraph*{Tree decompositions.}
A {\em tree decomposition} $\mathcal{T}$ of a graph $G$ is a pair
$(T,\{X_t\}_{t\in V(T)})$, where~$T$ is a tree in which every node $t$ is
assigned a vertex subset $X_t\subseteq V(G)$, called a bag, such that the
following three conditions hold:
\begin{description}
\item[(T1)]
  Every vertex of $G$ is in at least one bag, that is,
  $V(G)=\bigcup_{t\in V(T)} X_t$.
\item[(T2)]
  For every $uv\in E(G)$, there exists a node $t\in V(T)$ such that $X_t$
  contains both~$u$ and~$v$.
\item[(T3)]
  For every  $u\in V(G)$, the set $T_u  $   of all nodes of $T$ whose
  corresponding  bags contain~$u$, induces a connected subtree of $T$.
\end{description}
The {\em width} of the tree decomposition $\mathcal{T}$ is the integer
$\max_{t\in V(T)} |X_t| - 1$, that is, the size of its largest bag minus~$1$.
The {{\em treewidth}} of a graph~$G$, denoted by~$\tw(G)$, is the smallest
possible width that a tree decomposition of~$G$ can have.

We will need the following algorithmic results about treewidth.
\begin{proposition}[\cite{BodlaenderDDFLP16}]\label{prop:twcomp}
There is a $2^{\Oh(k)} \cdot n$ time algorithm that, given a graph $G$ and an
integer~$k$, either outputs a tree decomposition of width at most $5k+4$, or
correctly decides that $\tw(G)>k$.
\end{proposition}
\begin{proposition}[\cite{rank-treewidth,FominLS14}]\label{prop:twalg}
  There is an algorithm with running time $2^{\Oh(\tw(G))} \cdot n^{\Oh(1)} $ that
  computes a longest
  path between two given vertices of a given graph.
\end{proposition}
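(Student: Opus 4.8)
The plan is a dynamic programming over a tree decomposition, upgraded with the rank-based technique so that the number of states stays single-exponential in the width. First, by Proposition~\ref{prop:twcomp} we may assume that a tree decomposition of $G$ of width $w=\Oh(\tw(G))$ is at hand, and by standard transformations we turn it into a \emph{nice} tree decomposition with $\Oh(w\cdot n)$ nodes of the usual types (leaf, introduce-vertex, introduce-edge, forget, join), rooted so that the two prescribed terminals $s$ and $t$ lie in the root bag. The naive DP processes this decomposition bottom-up: at a node $x$ with bag $X_x$ and ``below-graph'' $G_x$, a partial solution is a collection of vertex-disjoint paths in $G_x$ in which every vertex outside $X_x$ has the degree it would have in a genuine $(s,t)$-path (internal vertices degree~$2$, the global ends $s,t$ degree~$1$), while vertices of $X_x$ may still be loose endpoints. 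Such a partial solution is abstracted by: a \emph{status vector} recording for each $v\in X_x$ whether $v$ is untouched, internal, or an endpoint; a \emph{pairing} that tells which loose endpoints are already joined by a path inside $G_x$; two bits indicating whether $s$ and $t$ have been absorbed yet; and the number of edges used so far, which we want to maximize. The bottleneck is clear: the number of pairings on up to $2w$ endpoints is $w^{\Oh(w)}=2^{\Oh(w\log w)}$, which is not single-exponential.

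To remove the $\log w$ factor one groups the stored partial solutions by (status vector, $s/t$-bits, edge count $\ell$) and views the achievable pairings in each group as a subset of the set of perfect matchings on the relevant loose endpoints. The key combinatorial input is the rank bound on the \emph{matchings connectivity matrix} $\mathcal M$ over $\mathrm{GF}(2)$, whose rows and columns are the perfect matchings of a $2p$-element ground set and whose $(M_1,M_2)$-entry is $1$ iff $M_1\cup M_2$ forms a single cycle: this matrix has rank exactly $2^{p-1}$ (Cygan--Kratsch--Nederlof; see also the rank-based framework in~\cite{rank-treewidth} and~\cite[Ch.~11]{cygan2015parameterized}). Hence, for each group it suffices to keep a \emph{representative} subfamily of pairings --- one whose $\mathrm{GF}(2)$-span with respect to $\mathcal M$ equals that of the full family --- and such a subfamily has size at most $2^{\Oh(w)}$ and can be extracted by Gaussian elimination in time $2^{\Oh(w)}$. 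One then checks that representativeness is preserved by the DP transitions: at introduce- and forget-nodes the operations on pairings are benign, and at a join-node the family of a child combines with that of the other child by a ``gluing'' operation under which a representative set of the combination is obtained from the combination of the two representative sets (followed by another reduction). Maintaining, at every node and for every status/$s,t$-bits/$\ell$ combination, one representative family of size $2^{\Oh(w)}$ together with the best achievable length, and reading off the maximum $\ell$ realizing a genuine $(s,t)$-pairing at the root, solves the decision version; an actual longest $(s,t)$-path is recovered by the usual traceback through the DP table. The total running time is $\Oh(w\cdot n)$ nodes times $2^{\Oh(w)}\cdot n^{\Oh(1)}$ work per node, i.e. $2^{\Oh(\tw(G))}\cdot n^{\Oh(1)}$.

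The main obstacle is precisely the single-exponential guarantee: proving $\operatorname{rank}_{\mathrm{GF}(2)}\mathcal M=2^{p-1}$ (done in the cited works via an explicit factorization of $\mathcal M$), verifying that representative families compose correctly through the join operation while the path-length parameter $\ell$ is simultaneously being tracked, and controlling the cost of the repeated Gaussian eliminations so that the polynomial factor stays polynomial. Everything else --- building the nice tree decomposition, forcing $s$ and $t$ to be the two global endpoints of the path, and writing down the per-node transitions of the naive DP --- is routine bookkeeping. As an alternative to the rank-based route one may instead use the Cut \& Count technique of Cygan et al.\ (yielding a randomized algorithm of the same running time) or the determinant-based representative-family machinery of Fomin, Lokshtanov and Saurabh~\cite{FominLS14}; the references cited in the statement carry out exactly these approaches.
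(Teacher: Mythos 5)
The paper does not prove this proposition; it is cited directly from the rank-based dynamic programming paper~\cite{rank-treewidth} and from~\cite{FominLS14}. Your sketch is an accurate rendering of exactly the argument those references give: nice tree decomposition, DP states recording vertex status and endpoint pairings, the $\mathrm{GF}(2)$ rank bound $2^{p-1}$ on the matchings connectivity matrix, representative families extracted by Gaussian elimination, and verification that representativeness is preserved under introduce/forget/join. You also correctly note the Cut\,\&\,Count and matroid-based alternatives, which the paper itself alludes to when remarking that the running time can be improved to $2^{\Oh(\tw(G))}\cdot n$ via~\cite{FominLS14}. So the proposal is correct and matches the cited approach; the only thing to be aware of is that, as a black-box citation, the paper relies on the full (and nontrivial) proofs in those references rather than on a self-contained argument.
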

Let us note that  the running time of Proposition~\ref{prop:twalg}  can be improved to  $2^{\Oh(\tw(G))} \cdot n$  by making use of the matroid-based approach from \cite{FominLS14}.

Our main theorem is based on graph minors, and we introduce some notation here.

\begin{definition}\label{def: minor-models}
A \emph{topological minor model} of $H$ in $G$ is a pair
of functions $(f,p)$ with $f:V(H)\to V(G)$ and $p:E(H)\to2^{E(G)}$
such that
\begin{enumerate}
\item  $f$ is injective, and
\item  for every edge $uv\in E(H)$, the graph~$G[p(uv)]$ is a path from
$f(u)$ to $f(v)$ in $G$, and
\item  for edges $e,g\in E(H)$ with~${e\neq g}$, the paths~$G[p(e)]$ and
$G[p(g)]$ intersect only in endpoints or not at all.
\end{enumerate}

The graph $T$ \emph{induced by} the topological minor model $(f,p)$ is the
subgraph of~$G$ that consists of the union of all paths $G[p(uv)]$ over all
$uv\in E(H)$.
The vertices in $f(V(H))$ are the \emph{branch vertices} of $T$, and $G[p(e)]$
\emph{realizes} the edge $e$ in $T$.
\end{definition}

\section{Win/Win algorithm for Longest Detour}\label{sec:winwin}

Throughout this section, let $G$ be an undirected graph
with $n$ vertices and $m$ edges, let $s,t\in V(G)$ and~$k\in\mathbb{N}$.
We wish to decide in time $2^{\Oh(k)}\cdot n^{\Oh(1)}$
whether $G$ contains an $(s,t)$-path of length at least $d_{G}(s,t)+k$.
To avoid trivialities, we assume without loss of generality that~$G$ is
connected and $s\neq t$ holds.
Moreover, we can safely remove vertices $v$ that are not part of any $(s,t)$-path.
\begin{definition}
Let $G$ be a graph and let $s,t\in V(G)$.
The \emph{$(s,t)$-relevant part of $G$} is the graph induced by all vertices
contained in some~$(s,t)$-path.
We denote it by $\relevantpart Gst$.
\end{definition}

The graph $\relevantpart Gst$ can be computed efficiently from the block-cut
tree of~$G$.
Recall that the \emph{block-cut tree} of a connected graph~$G$ is a tree where
each vertex corresponds to a \emph{block}, that is, a maximal biconnected
component $B\subseteq V(G)$, or to a \emph{cut vertex}, that is, a vertex whose
removal disconnects the graph.
A block~$B$ and a cut vertex~$v$ are adjacent in the block-cut tree if and only
if there is a block~$B'$ such that~$B\cap B'=\set{v}$.
\begin{lemma}\label{lem: block cut relevant}
Let $B_{s}$ and $B_{t}$ denote the blocks of $G$ that contain $s$
and $t$, respectively.
Furthermore, let $P$ be the unique $(B_s,B_t)$-path in the block-cut tree of~$G$.
Then $\relevantpart Gst$ is the graph induced by the union of all blocks visited
by~$P$.
\end{lemma}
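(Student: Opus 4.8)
The statement identifies $\relevantpart{G}{s}{t}$ with the union of the blocks visited by the unique block-cut-tree path $P$ from $B_s$ to $B_t$. I want to prove this by establishing the two inclusions. Let $U$ denote the union of all blocks visited by $P$, and let $H = G[U]$ be the graph induced by those vertices. I must show (i) every vertex of $H$ lies on some $(s,t)$-path, so $H \subseteq \relevantpart{G}{s}{t}$; and (ii) every vertex on some $(s,t)$-path lies in $U$, so $\relevantpart{G}{s}{t} \subseteq H$. A small technical point to handle first: if $s$ (or $t$) is itself a cut vertex, it may lie in several blocks; I will fix $B_s$ to be any block on the $s$-side that $P$ passes through, or more cleanly, observe that the claim is insensitive to this choice because adjacent blocks on $P$ share exactly the intervening cut vertex, which is itself in $U$.

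**The easy inclusion (i).** Let $P = B_s = B_0, v_1, B_1, v_2, \dots, v_r, B_r = B_t$ be the alternating sequence of blocks and cut vertices along $P$ in the block-cut tree, where $v_i$ is the cut vertex with $B_{i-1} \cap B_i = \{v_i\}$. Fix any vertex $w \in U$; say $w \in B_i$. I will construct an $(s,t)$-path through $w$. Inside each block $B_j$, pick a path from the "entry" vertex to the "exit" vertex: since blocks are biconnected, for any two vertices $a, b \in B_j$ and any third vertex $c \in B_j$ there is an $a$–$b$ path in $B_j$ avoiding none of these in a problematic way — more precisely, biconnectedness of $B_i$ gives a path from $v_i$ to $v_{i+1}$ through $w$ (by Menger/fan-type arguments: a biconnected graph has, between any two vertices, a path through any prescribed third vertex; formally, add a new vertex adjacent to $v_i, v_{i+1}, w$ and use that the result is still $2$-connected, or just cite that in a $2$-connected graph every two vertices lie on a common cycle and stitch). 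Concatenate: a path from $s$ to $v_1$ in $B_0$, then $v_1$ to $v_2$ in $B_1$, …, then inside $B_i$ a path $v_i \to w \to v_{i+1}$, …, finally $v_r$ to $t$ in $B_r$. Because distinct blocks share at most one vertex and consecutive blocks on $P$ share exactly the $v_j$ used as junction while non-consecutive blocks on $P$ are vertex-disjoint (a standard property of the block-cut tree), the concatenation is a simple path. Hence $w \in \relevantpart{G}{s}{t}$.

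**The harder inclusion (ii), and the main obstacle.** Now suppose $Q$ is an $(s,t)$-path and $w \in V(Q)$; I must show $w \in U$. The key structural fact is that any path from $s$ to $t$ in $G$, viewed through the block-cut tree, must "project" onto the tree-path $P$: more precisely, if $Q$ enters a block $B$ not on $P$, it can only do so through a single cut vertex (the one separating $B$ from the $B_s$–$B_t$ part of the tree), and then it must leave $B$ through that same cut vertex, so $Q$ cannot actually reach $t$ through $B$ unless it backtracks — contradicting that $Q$ is self-avoiding. I will formalize this: let $w \in V(Q)$ and let $B$ be a block containing $w$. If $B$ is not on $P$, then in the block-cut tree there is a cut vertex $c$ separating $B$ from every block of $P$; removing $c$ from $G$ disconnects $w$ from $\{s,t\} \setminus \{c\}$. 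A self-avoiding $(s,t)$-path passing through $w$ would have to use $c$ at least twice (once to get from $s$'s side into the component of $w$, once to return toward $t$), unless $c \in \{s,t\}$ — and if $c = s$ or $c = t$ then $c$ is an endpoint of $P$ and $B$ would be on $P$ after all by our choice of $B_s, B_t$. This contradiction shows $B$ must lie on $P$, hence $w \in U$.

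**What I expect to be fiddly.** The genuine work is the clean handling of cut-vertex endpoints and the "$Q$ projects onto $P$" argument — i.e., making rigorous that a simple path between two vertices visits exactly the blocks on the block-cut-tree path between them, using the separator property (T3)-style connectivity of blocks in the tree. I would isolate this as: for any $x, y \in V(G)$ and any block $B$, if $B$ separates some vertex of a simple $(x,y)$-path from both $x$ and $y$ in the block-cut tree (via an intermediate cut vertex used twice), we get a contradiction with simplicity. Everything else — biconnectivity giving a path through a prescribed vertex, and disjointness of non-adjacent blocks — is standard and I would cite it rather than reprove it.
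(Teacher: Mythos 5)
Your proposal takes essentially the same approach as the paper: the $\supseteq$ inclusion uses biconnectivity of the blocks to route an $(s,t)$-path through any prescribed vertex of a block on $P$, and the $\subseteq$ inclusion argues that any block off the tree-path is hidden behind a cut vertex $c$ that a simple $(s,t)$-path through $w$ would have to visit twice. However, your handling of the sub-case $c \in \{s,t\}$ is flawed: if $s$ happens to be a cut vertex and $B$ is a block containing $s$ that hangs off in the direction away from $t$, then $c = s$, yet $c$ is \emph{not} an endpoint of $P$ (the endpoints of $P$ are the blocks $B_s$ and $B_t$, not cut vertices), and $B$ need not be on $P$ --- so this case genuinely arises and your stated justification for dismissing it does not work. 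The fix is that your caveat was unnecessary in the first place: the ``visited twice'' argument applies uniformly. If $c = s$ and $w \neq s$, the $(w,t)$-segment of $Q$ must still cross $c = s$, which is already the first vertex of $Q$, so $Q$ visits $c$ twice, contradicting that it is self-avoiding; and if $w = s$ then trivially $w \in B_s \subseteq U$.
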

\begin{proof}
  Let $v\in G_{s,t}$.
  Then there is an $(s,t)$-path that contains~$v$; in particular, there is an
  $(s,v)$-path $p_1$ and a $(v,t)$-path $p_2$ such that $p_1$ and $p_2$ are
  internally vertex disjoint.
  If $v$ was not in one of the blocks visited by~$P$, it would be hidden behind
  a cut vertex and $p_1$ and $p_2$ would have to intersect in the cut vertex;
  therefore, $v$ is contained in one of the blocks visited by~$P$.

  For the other direction, let $v$ be a vertex contained in a block~$B$ visited
  by~$P$.
  Suppose that $u$ is the cut vertex preceding~$B$ in~$P$ (or $u=s$ in case
  $B=B_s$) and $w$ is the cut vertex following~$B$ in~$P$ (or $w=t$ in case
  $B=B_t$).
  Then $u\neq w$ holds, and there is an $(s,u)$-path and a $(w,t)$-path that are
  vertex-disjoint.
  Since $B$ is biconnected, there are paths from $u$ to $v$ and from $v$ to $w$
  that are internally vertex-disjoint.
  Combined, these path segments yield an $(s,t)$-path that visits~$v$.
\end{proof}

We formulate an immediate implication of Lemma~\ref{lem: block cut relevant}
that will be useful later.
\begin{corollary}
\label{cor: relevant-part has block-cut path}
The block-cut tree of $\relevantpart Gst$ is a $(B_s,B_t)$-path.
\end{corollary}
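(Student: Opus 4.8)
The plan is to apply Lemma~\ref{lem: block cut relevant} to the graph $H := \relevantpart Gst$ itself and then simply read off its block-cut tree. Write $B_s = B_1, B_2, \dots, B_\ell = B_t$ for the blocks of $G$ visited by $P$, listed in the order in which $P$ traverses them, and let $c_i$ be the cut vertex that separates $B_i$ from $B_{i+1}$ along $P$. By Lemma~\ref{lem: block cut relevant}, $V(H) = \bigcup_{i=1}^{\ell} B_i$ and $H = G[V(H)]$. (If $\ell = 1$ the block-cut tree of $H$ is a single node, which is a trivial $(B_s,B_t)$-path, so assume $\ell \ge 2$.)

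First I would pin down how the sets $B_i$ overlap. Since $P$ is a path in the block-cut tree, it visits pairwise distinct nodes, so $c_1, \dots, c_{\ell-1}$ are pairwise distinct vertices of $G$. Two distinct blocks of $G$ share at most one vertex, and that vertex is then a cut vertex adjacent in the block-cut tree to both blocks; since $P$ passes through each cut-vertex node at most once, using exactly its two $P$-neighbours there, any two blocks $B_i, B_j$ on $P$ that share a vertex must be consecutive. Hence $B_i \cap B_{i+1} = \{c_i\}$ and $B_i \cap B_j = \emptyset$ whenever $|i-j| \ge 2$.

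Next I would argue that the blocks of $H$ are exactly $B_1, \dots, B_\ell$. Each $B_i$ induces the same graph in $H$ as in $G$, namely the biconnected graph $G[B_i]$; and since $H$ is a subgraph of $G$, every biconnected subgraph of $H$ is also one of $G$, so maximality of $B_i$ in $G$ transfers to $H$. It then remains to check that every edge of $H$ lies in some $B_i$. An edge $uv \in E(H) \subseteq E(G)$ lies in a unique block $B$ of $G$, and the argument would be that $B$ must be on $P$: otherwise, picking blocks $B_i \ni u$ and $B_j \ni v$ on $P$, we get $B \ne B_i$ and $B \ne B_j$, so $u$ and $v$ are cut vertices and the block-cut tree of $G$ contains the path $B_i - u - B - v - B_j$, which is internally disjoint from $P$; but the unique block-cut path from $B_i$ to $B_j$ is a subpath of $P$ and hence avoids $B$, contradicting uniqueness of paths in a tree (the degenerate possibilities $i=j$ or $u=v$ are excluded the same way, using that $uv$ is not a loop). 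So the vertices and edges of $H$ are covered by the maximal biconnected subgraphs $B_1, \dots, B_\ell$, which are therefore precisely the blocks of $H$, with no merging or collapsing.

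Finally, by the overlap pattern from the second step, a vertex lies in two blocks of $H$ exactly when it equals some $c_i$, and each $c_i$ lies in precisely the two blocks $B_i$ and $B_{i+1}$. Hence the cut vertices of $H$ are exactly $c_1, \dots, c_{\ell-1}$, and the block-cut tree of $H$ is $B_1 - c_1 - B_2 - c_2 - \dots - c_{\ell-1} - B_\ell$, a $(B_s, B_t)$-path. I expect the only real subtlety to be the edge-covering step, i.e.\ ruling out that passing to the induced subgraph $\relevantpart Gst$ creates an edge joining two blocks that are far apart on $P$; this is precisely where uniqueness of paths in the block-cut tree is needed.
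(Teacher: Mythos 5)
Your proof is correct. The paper treats this corollary as an immediate consequence of Lemma~\ref{lem: block cut relevant} without giving an explicit argument; what you have written supplies precisely the implicit details, namely that the blocks of the induced subgraph $\relevantpart Gst$ are exactly the blocks of $G$ on $P$, that consecutive ones share a single cut vertex, and---the real subtlety you correctly isolate---that restricting to $\relevantpart Gst$ cannot introduce an edge that would merge non-consecutive blocks, by uniqueness of paths in the block-cut tree of $G$.
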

Hopcroft and Tarjan~\cite{HopcroftTarjan73} proved that the block-cut tree of a
graph can be computed in linear time using DFS.
Hence we obtain an algorithm for computing~$G_{s,t}$ from~$G$.
\begin{corollary}\label{cor: compute-relevant}
There is a linear-time algorithm that computes~$\relevantpart Gst$ from~$G$.
\end{corollary}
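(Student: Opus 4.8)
The plan is to read Lemma~\ref{lem: block cut relevant} directly as an algorithm, using the linear-time block-cut tree construction of Hopcroft and Tarjan~\cite{HopcroftTarjan73} as the only non-trivial ingredient. If $G$ is disconnected, I would first restrict to the connected component containing $s$ (returning the empty graph if $t$ lies in a different component), so that we may assume $G$ is connected. I would then compute, in time $\Oh(n+m)$, the block-cut tree of $G$ together with the list of blocks containing each vertex.

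Next I would identify the tree node attached to $s$: this is the unique block containing $s$ if $s$ is not a cut vertex, and the cut-vertex node for $s$ itself otherwise; symmetrically for $t$. One BFS (or DFS) from the $s$-node produces parent pointers, and backtracking from the $t$-node recovers the unique path $P$ between them; since the block-cut tree has $\Oh(n)$ nodes, this costs $\Oh(n+m)$. I would then mark the blocks visited by $P$, let $X$ be the union of their vertex sets, and output $G[X]$.

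The running time is linear because each edge of $G$ lies in exactly one block and each block is connected, so the total size $\sum_{B}\abs{V(B)}$ over all blocks $B$ is $\Oh(n+m)$; hence marking the blocks on $P$, assembling $X$, and extracting $G[X]$ all take time $\Oh(n+m)$. Correctness is immediate from Lemma~\ref{lem: block cut relevant}, since the blocks visited by $P$ are precisely the blocks visited by the unique $(B_s,B_t)$-path of that lemma --- interpreting $B_s$ (resp.\ $B_t$), in the corner case that $s$ (resp.\ $t$) is a cut vertex, as the block adjacent to it along $P$, which is the only block containing $s$ that any $(s,t)$-path can enter. The one point that genuinely needs care, rather than being a real obstacle, is exactly this bookkeeping --- keeping every step at $\Oh(n+m)$ and handling the degenerate cases ($s$ or $t$ a cut vertex, $B_s = B_t$, or $t$ unreachable from $s$) --- so that one may claim a linear, and not merely polynomial, bound.
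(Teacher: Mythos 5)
Your argument is correct and follows exactly the route the paper intends: compute the block-cut tree in linear time via Hopcroft--Tarjan, find the unique tree path between the nodes for $s$ and $t$, and output the induced subgraph on the union of the blocks along that path, with correctness delegated to Lemma~\ref{lem: block cut relevant}. The paper states this only as a one-line remark, so your more careful accounting of the $\Oh(n+m)$ bound and of the corner cases (cut-vertex endpoints, $B_s=B_t$, disconnectedness) is a faithful and slightly more explicit version of the same proof.
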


\subsection{The algorithm}

By definition, the graph $\relevantpart Gst$ contains the same set of
$(s,t)$-paths as~$G$.
Our algorithm for \probLongDet establishes a ``win/win'' situation as follows:
We prove that, if the treewidth of $\relevantpart Gst$ is ``sufficiently
large'', then $(G,s,t,k)$ is a $\mathtt{YES}$-instance of \probLongDet.
Otherwise the treewidth is small, and we use a known treewidth-based dynamic
programming algorithm for computing the longest~$(s,t)$-path.
Hence the algorithm builds upon the following subroutines:
\begin{enumerate}
  \item The algorithm from Corollary~\ref{cor: compute-relevant},
    computing the relevant part~$\relevantpart Gst$ of~$G$ in time
    $\Oh(n+m)$.
  \item \textsc{Compute Treewidth}$(G,w)$ from Proposition~\ref{prop:twcomp}, which is given~$G$
    and $w\in\mathbb{N}$ as input, and either constructs a
    tree-decomposition~$T$ of~$G$ whose width is bounded by $5w+4$, or outputs
    $\mathtt{LARGE}$.
    If the algorithm outputs $\mathtt{LARGE}$, then $\tw(G)>w$ holds.
    The running time is $2^{\Oh(w)}\cdot
    n$.
  \item $\probKPath(G,T,s,t)$ from Proposition~\ref{prop:twalg}, which is given
    $G,s,t$
    and additionally a tree-decomposition~$T$ of~$G$, and outputs a longest $(s,t)$-path
    in~$G$. The running time is $2^{\Oh(w)}\cdot n^{\Oh(1)}$,
    where~$w$ denotes the width of~$T$.
\end{enumerate}

We now formalize what we mean by ``sufficiently large'' treewidth.
\begin{definition}
A function $f:\mathbb{N}\to\mathbb{N}$ is \emph{detour-enforcing}
if, for all $k\in\mathbb{N}$ and all graphs~$G$ with vertices~$s$ and $t$, the
following implication holds:
If $\tw(\relevantpart Gst)>f(k)$, then $G$ contains an $(s,t)$-path
of length at least $d_{G}(s,t)+k$.
\end{definition}
\begin{theorem}\label{thm: detour-fn-exists-strong}
  The function $f:k\mapsto 32 k + 2$ is detour-enforcing.
\end{theorem}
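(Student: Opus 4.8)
The plan is to prove the implication in the definition of detour-enforcing directly. Assume $\tw(\relevantpart Gst)>32k+2$; the case $k=0$ is trivial, so take $k\ge 1$. Since the treewidth of a graph is the maximum of the treewidths of its blocks, some block $B$ of $\relevantpart Gst$ has $\tw(B)>32k+2$, and I would apply our combinatorial theorem on planar minors in large-treewidth graphs — the one derived from the structure theorems of Leaf and Seymour~\cite{LeafS15} and Raymond and Thilikos~\cite{RaymondT16} — to obtain a subgraph $H$ of $B$ with $H\cong\subdivtetra k$; the constant $32$ is precisely what this step requires. As $\subdivtetra k$ is $2$-connected, $H$ lies in a single block of $\relevantpart Gst$, namely $B$, and by Corollary~\ref{cor: relevant-part has block-cut path} the block-cut tree of $\relevantpart Gst$ is a path running from $B_s$ to $B_t$, so $B$ lies on that path. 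Hence every $(s,t)$-path of $G$, being contained in $\relevantpart Gst$, passes through $B$. Now fix a shortest $(s,t)$-path $P$: it enters $B$ at a vertex $u$ and leaves $B$ at a vertex $w$ with $u\neq w$, its intersection with $B$ is a single subpath $Q$ from $u$ to $w$, and $|Q|=d_B(u,w)$ since $P$ is shortest. Moreover, since $P$ meets $V(B)$ only along $Q$ and the only vertices of $B$ belonging to other blocks of $\relevantpart Gst$ are $u$ and $w$, replacing $Q$ in $P$ by any other $(u,w)$-path of $B$ still yields a valid $(s,t)$-path of $G$. So it suffices to find a $(u,w)$-path in $B$ of length at least $d_B(u,w)+k$.

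For any connected subgraph $\hat B\subseteq B$ with $u,w\in V(\hat B)$ we have $d_{\hat B}(u,w)\ge d_B(u,w)$, while the longest $(u,w)$-path of $B$ is at least as long as that of $\hat B$; so it is enough to produce one such $\hat B$ whose longest $(u,w)$-path beats $d_{\hat B}(u,w)$ by at least $k$. I would take $\hat B=H\cup A_u\cup A_w$, where $A_u$ and $A_w$ are vertex-disjoint paths attaching $u$ and $w$ respectively to distinct vertices $h_u,h_w$ of $H$ while internally avoiding $V(H)$ (each trivial when $u$, respectively $w$, already lies on $H$). Such attachments always exist: in the graph obtained from $B$ by adding one new vertex adjacent to $u$ and $w$ and another new vertex adjacent to all of $V(H)$, the $2$-connectivity of $B$ forces the minimum vertex cut between the two new vertices to be at least $2$, so Menger's theorem supplies the two disjoint paths. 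In $\hat B$ — which is just $\subdivtetra k$ with two pendant paths attached at $h_u$ and $h_w$ — every $(u,w)$-path is $A_u$, followed by an $(h_u,h_w)$-path inside $H$, followed by $A_w$, so the longest $(u,w)$-path of $\hat B$ minus $d_{\hat B}(u,w)$ equals the longest $(h_u,h_w)$-path in $\subdivtetra k$ minus the shortest one. The theorem thus reduces to the purely combinatorial assertion that in $\subdivtetra k$, for any two vertices $x$ and $y$, the longest $(x,y)$-path is at least $k$ edges longer than a shortest $(x,y)$-path.

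This assertion involves only the six branch paths of the underlying $K_4$, whose lengths $\ell_1,\dots,\ell_6$ are arbitrary integers with $\ell_i\ge k$. By the automorphisms of $K_4$ there are only finitely many cases for the pair $(x,y)$, depending on whether each of $x,y$ is a branch vertex or an interior vertex of a branch path and on how the two branch paths carrying them sit inside $K_4$; in each case the length of a shortest $(x,y)$-path and of a longest $(x,y)$-path in $\subdivtetra k$ are the minimum and the maximum, respectively, of a fixed finite list of linear forms in $\ell_1,\dots,\ell_6$. Negating the desired inequality then gives, for each case, a finite system of linear inequalities in $\ell_1,\dots,\ell_6$ together with the bounds $\ell_i\ge k$, and I would verify that each such system is infeasible using a linear programming solver, recording the corresponding rational infeasibility certificates, read off from LP duality, in \S\ref{sec: LP unsatisfiable}. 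Splicing the resulting $(u,w)$-path of $B$ back into $P$ finally yields an $(s,t)$-path of length at least $d_G(s,t)+k$.

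The steps I expect to require the most care are the combinatorial theorem itself — both its proof (from~\cite{LeafS15} and~\cite{RaymondT16}) and the bookkeeping that fixes its constant so as to deliver $f(k)=32k+2$ — and making the reduction to $\subdivtetra k$ fully rigorous, in particular checking that $\hat B$ really is a copy of $\subdivtetra k$ with two pendant paths and that the pendant lengths cancel in the comparison between longest and shortest $(u,w)$-paths. The family of linear programs, by contrast, is finite, and verifying their infeasibility is a mechanical task.
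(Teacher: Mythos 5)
Your proposal is correct and follows essentially the same strategy as the paper: apply the structural lemma (Lemma~\ref{lem: tw implies K4}) to get a $\subdivtetra k$ subgraph, route an $(s,t)$-path through it, and invoke the LP-infeasibility argument (Lemma~\ref{lem: detour-in-K4}) for the rerouting. The only real divergence is in the routing step: the paper's Lemma~\ref{lemma:route to K4} grows two disjoint attachment paths from $s$ and $t$ to the subdivided tetrahedron $M$ and then builds the long $(s,t)$-path from scratch, while you fix a shortest $(s,t)$-path, identify the cut vertices $u,w$ bounding the relevant block $B$, use a Menger argument (via the two auxiliary vertices) to attach $u,w$ to $M$ by pendant paths, and splice a longer $(u,w)$-path back in; these are equivalent presentations of the same biconnectivity-plus-Menger idea, and both rely on Corollary~\ref{cor: relevant-part has block-cut path} to know that every $(s,t)$-path passes through $B$. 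Your Menger argument is sound (one can check that $h_u\neq h_w$ because the two internally disjoint paths share only the auxiliary endpoints, and that truncating at the first $V(H)$-hit keeps the attachment paths internally disjoint from $V(H)$), and the cancellation of the pendant-path lengths in the comparison of longest versus shortest $(u,w)$-paths in $\hat B$ is correct.
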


We defer the proof of this theorem to the next section, and instead state
Algorithm D, which uses~$f$ to solve \probLongDet.
Algorithm D turns out to be an FPT-algorithm already when any detour-enforcing
function~$f$ is known (as long as it is polynomial-time computable),
and it becomes faster when detour-enforcing~$f$ of slower growth are used.

\begin{algor}{D}{\probLongDet}{Given $(G,s,t,k)$, this algorithm decides whether
    the graph~$G$ contains an $(s,t)$-path of length at least $\dist_G(s,t)+k$.}

  \item[D1] (Restrict to relevant part)
    Compute $G_{s,t}$ using Corollary~\ref{cor: compute-relevant}.
  \item[D2] (Compute shortest path)
    Compute the distance $d$ between $s$ and $t$ in $G_{s,t}$.
  \item[D3] (Compute tree-decomposition)
    Call $\textsc{Compute Treewidth}(G_{s,t},f(k))$.
  \begin{description}
    \item[D3a] (Small treewidth) If the subroutine found a
      tree-decomposition~$T$ of width at most~$f(k)$, call
      $\probKPath(G_{s,t},T,s,t)$.
      Output \texttt{YES} if there is an $(s,t)$-path of length at least
      $d+k$, otherwise output \texttt{NO}.
    \item[D3b] (Large treewidth) If the subroutine returned \texttt{LARGE},
      output $\mathtt{YES}$.
  \end{description}
\end{algor}

We prove the running time and correctness of Algorithm~D.
\begin{lemma}\label{lem: algo-from-detour-fn}
  For every polynomial-time computable detour-enforcing function
  $f:\mathbb{N}\to\mathbb{N}$, Algorithm~D solves \probLongDet in time
  $2^{\Oh(f(k))}\cdot n^{\Oh(1)}$.
\end{lemma}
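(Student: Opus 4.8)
The plan is to verify the two subroutine-based properties of Algorithm~D: that each step runs within the claimed time bound, and that the output is correct on every instance. For the running time, I would walk through steps D1--D3 in order. Step D1 invokes Corollary~\ref{cor: compute-relevant} and costs $\Oh(n+m)$. Step D2 is a single BFS/shortest-path computation in $G_{s,t}$, costing $\Oh(n+m)$. Step D3 calls \textsc{Compute Treewidth}$(G_{s,t}, f(k))$, which by Proposition~\ref{prop:twcomp} runs in time $2^{\Oh(f(k))}\cdot n$; here I would note that $f$ being polynomial-time computable means the value $f(k)$ is produced in time polynomial in $k$ (hence in $n$ plus the input size), so feeding it to the treewidth subroutine is legitimate. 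If D3 returns \texttt{LARGE}, step D3b outputs \texttt{YES} in constant time. If D3 returns a tree-decomposition $T$, its width is at most $5f(k)+4$, so the call to $\probKPath(G_{s,t},T,s,t)$ in step D3a runs, by Proposition~\ref{prop:twalg}, in time $2^{\Oh(5f(k)+4)}\cdot n^{\Oh(1)} = 2^{\Oh(f(k))}\cdot n^{\Oh(1)}$. Summing, the total is $2^{\Oh(f(k))}\cdot n^{\Oh(1)}$, as claimed.

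For correctness, I would argue by cases according to the branch taken in step D3. First, since $G_{s,t}$ contains exactly the same set of $(s,t)$-paths as $G$ (by definition of the relevant part), the distance $d$ computed in D2 equals $\dist_G(s,t)$, and an $(s,t)$-path of length at least $d+k$ exists in $G_{s,t}$ if and only if one exists in $G$. In the \texttt{LARGE} branch (D3b), Proposition~\ref{prop:twcomp} guarantees $\tw(G_{s,t}) > f(k)$; since $f$ is detour-enforcing and $\tw(\relevantpart{G}{s}{t}) > f(k)$, the definition of detour-enforcing yields that $G$ contains an $(s,t)$-path of length at least $\dist_G(s,t)+k$, so outputting \texttt{YES} is correct. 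In the small-treewidth branch (D3a), the subroutine from Proposition~\ref{prop:twalg} returns an actual longest $(s,t)$-path in $G_{s,t}$; comparing its length against $d+k$ decides correctly whether a path of length at least $\dist_G(s,t)+k$ exists, and the output matches.

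The only subtlety worth flagging is the interface mismatch in step D3a: Algorithm~D's description says ``tree-decomposition $T$ of width at most $f(k)$,'' whereas Proposition~\ref{prop:twcomp} only promises width at most $5f(k)+4$. This is harmless because Proposition~\ref{prop:twalg} tolerates any width (its running time depends on the width of the supplied decomposition, which is still $\Oh(f(k))$) and because the \texttt{LARGE} verdict still certifies $\tw(G_{s,t}) > f(k)$, which is all that the detour-enforcing property needs. Thus no genuine obstacle arises; the argument is a routine bookkeeping of the three subroutines' guarantees, and the main point to get right is simply that the width blow-up by a constant factor affects neither the correctness of the \texttt{YES}/\texttt{NO} decision nor the asymptotic running time.
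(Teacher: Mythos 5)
Your proof is correct and follows essentially the same win/win argument as the paper's: bound the time for each of the three subroutine calls, then argue correctness separately for the \texttt{LARGE} branch (via the detour-enforcing property) and the small-treewidth branch (via exact computation of the longest path on $G_{s,t}$). You are also right to flag the width-$f(k)$ vs.\ width-$5f(k)+4$ discrepancy in the statement of step D3a; the paper's proof silently uses the correct bound $5f(k)+4$, and as you note, only the asymptotics and the guarantee $\tw(G_{s,t}) > f(k)$ in the \texttt{LARGE} case actually matter, so the argument goes through unchanged.
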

\begin{proof}
Using \textsc{Compute Treewidth}$(\relevantpart Gst,f(k))$, we first
determine in time $2^{\Oh(f(k))}\cdot n^{\Oh(1)}$
whether $\tw(\relevantpart Gst)\leq f(k)$.
\begin{itemize}
  \item If $\tw(\relevantpart Gst)\leq f(k)$, then \textsc{Compute Treewidth}
    yields a tree decomposition $T$ of $\relevantpart Gst$ whose width
    is bounded by $5\cdot f(k)+4$.
    We invoke the algorithm for \probKPath to compute a longest $(s,t)$-path in
    $G_{s,t}$ and we output $\mathtt{YES}$ if and only if its length is at least
    $d(s,t)+k$.
    Since the $(s,t)$-paths in $G$ are precisely the $(s,t)$-paths in $G_{s,t}$,
    this answer is correct.
    The running time of this step is at most $2^{\Oh(f(k))}\cdot
    n^{\Oh(1)}$.
  \item If $\tw(\relevantpart Gst)>f(k)$, we output $\mathtt{YES}$. Since
    $f$ is detour-enforcing, the graph $G$ indeed contains an $(s,t)$-path
    of length at least $d_{G}(s,t)+k$.
\end{itemize}
We conclude that Algorithm~D is correct and observe also that its running time is
bounded by ${2^{\Oh(f(k))}\cdot n^{\Oh(1)}}$.
\end{proof}

Theorem~\ref{thm: detour-fn-exists-strong} and Lemma~\ref{lem:
algo-from-detour-fn} imply a $2^{\Oh(k)} \cdot n^{\Oh(1)}$ time algorithm for
\probLongDet.

\subsection{Overview of the proof of Theorem~\ref{thm: detour-fn-exists-strong}}

In our proof of Theorem~\ref{thm: detour-fn-exists-strong}, large subdivisions
of $K_{4}$ play an important role. Intuitively speaking, a sufficiently
large subdivision of $K_{4}$ in $\relevantpart Gst$ allows us to
route some $(s,t)$-path through it and then exhibit a long detour within
that subdivision.

\begin{definition}
For $k\in\mathbb{N}$, a graph $F$ is a $\subdivtetra k$ if it can
be obtained by subdividing each edge of~$K_{4}$ at least~$k$ times.
Please note that the numbers of subdivisions do not need to agree for different
edges.
\end{definition}
We show in Section~\ref{sec: K4 rerouting} that graphs $G$ containing $\subdivtetra k$
subgraphs in $\relevantpart Gst$ have $k$-detours.
\begin{lemma}
\label{lem: K4 implies detour}Let $G$ be a graph and $k\in\mathbb{N}$.
If $\relevantpart Gst$ contains a $\subdivtetra k$ subgraph, then
$G$ contains an $(s,t)$-path of length at least $d_{G}(s,t)+k$.
\end{lemma}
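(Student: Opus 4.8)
The plan is to exploit the structure of $\relevantpart Gst$ recorded in Corollary~\ref{cor: relevant-part has block-cut path}: its block-cut tree is a $(B_s,B_t)$-path, so $\relevantpart Gst$ is a chain of blocks glued at cut vertices, and every $(s,t)$-path must traverse this chain, passing through each cut vertex exactly once. The $\subdivtetra k$ subgraph~$F$ is biconnected (it is a subdivision of $K_4$, which is $3$-connected), hence it lies entirely inside a single block~$B$ of $\relevantpart Gst$. Let $u$ be the cut vertex (or $s$) where the $(s,t)$-chain enters~$B$ and $w$ the cut vertex (or $t$) where it leaves~$B$; as in the proof of Lemma~\ref{lem: block cut relevant}, we have $u\neq w$, and there is an $(s,u)$-path $p_1$ outside $B$ and a $(w,t)$-path $p_2$ outside $B$, both internally vertex-disjoint from $B\setminus\{u,w\}$ and from each other. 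It therefore suffices to find a $u$--$w$ path inside $B$ of length at least $d_B(u,w)+k$, since concatenating $p_1$, this path, and $p_2$ yields an $(s,t)$-path, and the detour within $B$ transfers to a detour of the whole path (the shortest $(s,t)$-path in $\relevantpart Gst$ routes through $B$ via a shortest $u$--$w$ segment, so the global offset is at least the local one).

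First I would establish the purely combinatorial core: any $\subdivtetra k$ graph~$F$ contains, for any two vertices $a,b\in V(F)$, two internally vertex-disjoint $a$--$b$ paths whose lengths differ by at least~$k$. Granting this, let $a,b$ be $u,w$ if both lie in $F$; in general $u$ and $w$ need only lie in the block $B$, not in $F$ itself, so I would first route from $u$ to some vertex $a$ of $F$ and from some vertex $b$ of $F$ to $w$ using internally disjoint paths inside $B$ (possible since $B$ is biconnected, by a Menger/fan-type argument — if $u,w\notin V(F)$ we get two disjoint paths from $\{u,w\}$ to $V(F)$, and if say $u\in V(F)$ we take $a=u$). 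Then the short $u$--$w$ detour is: go $u\leadsto a$, take the \emph{long} $a$--$b$ path through $F$, go $b\leadsto w$; compare against $u\leadsto a$, the \emph{short} $a$--$b$ path, $b\leadsto w$. The two resulting $u$--$w$ walks are actually paths (internal disjointness), and their lengths differ by at least~$k$, so the longer one is at least $k$ above the shorter, which is at least $d_B(u,w)$.

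The main obstacle is the combinatorial core: showing that a $\subdivtetra k$ admits two internally disjoint $a$--$b$ paths of length difference $\geq k$. The clean way is to observe that the four subdivided-$K_4$ branch vertices together with $a$ and $b$ induce (after suppressing degree-$2$ vertices) a subdivision of a graph on at most six vertices that still contains a subdivided $K_4$ on the original four branch vertices, and every edge of the underlying $K_4$ remains a path of length $\geq k$. In the underlying $K_4$ there are always two internally vertex-disjoint $a'$--$b'$ paths using a different number of original edges — e.g. for branch vertices $x_1,x_2,x_3,x_4$, the paths $x_1x_2$ and $x_1x_3x_4x_2$ use $1$ and $3$ edges — so the corresponding paths in $F$ differ in length by at least one subdivided edge, hence by at least~$k$; one then lifts $a,b$ to the branch vertices by prepending/appending short disjoint segments, which only shifts both path lengths by the same amount. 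I would handle the finitely many positional cases of where $a$ and $b$ sit relative to the branch vertices by this case analysis, which is routine once the right pair of $K_4$-routes is identified.
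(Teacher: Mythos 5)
Your routing argument (restrict to the block $B$ containing the $\subdivtetra k$-subgraph~$F$, pick the entry and exit cut vertices, route disjointly to two vertices $a,b\in V(F)$ via biconnectivity, and then shift the detour question to $F$) is essentially the same as the paper's Lemma~\ref{lemma:route to K4}, and the bookkeeping that transfers a local detour in $B$ to a global one is fine. The gap is in your ``combinatorial core''. You claim that any $\subdivtetra k$ contains, for every $a,b$, two \emph{internally disjoint} $a$--$b$ paths whose lengths differ by at least $k$, and you justify this by saying that the realized images of two $K_4$-routes using different numbers of $K_4$-edges (e.g.\ $x_1x_2$ and $x_1x_3x_4x_2$) ``differ in length by at least one subdivided edge, hence by at least $k$''. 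That inference is false, because the subdivision lengths are not uniform: each realized edge has length \emph{at least} $k$ but can be much longer. Concretely, take $k=2$, realize $x_1x_2$ by a path of length $5$, and every other $K_4$-edge by a path of length $2$. For $a=x_1$, $b=x_2$, the five $(a,b)$-paths then have lengths $5,4,4,6,6$, and every pair that is internally disjoint (the direct edge with anything, or $x_1x_3x_2$ with $x_1x_4x_2$) differs by at most $1<k$. So the claim as you state it is simply not true, and the example pair you pick ($\ell=5$ vs.\ $\ell=6$) differs by $1<k$. What \emph{is} true, and what the combination step actually needs, is the weaker statement that some $(a,b)$-path has length at least $d_F(a,b)+k$; internal disjointness of the two routes is irrelevant to whether $(u\leadsto a)+Q_i+(b\leadsto w)$ is a path, since each $Q_i$ lives inside $F$ while the two approach segments avoid $V(F)$ internally. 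In my example the witnessing pair is $x_1x_3x_2$ (length $4$) and $x_1x_3x_4x_2$ (length $6$), which share the internal vertex $x_3$.

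This weaker statement is exactly the paper's Lemma~\ref{lem: detour-in-K4}, and proving it is the non-trivial step you dismiss as ``routine once the right pair of $K_4$-routes is identified''. Because which pair of routes works depends on the relative lengths of the six realized edges, there is no single pair that always works; the paper instead lists all $(u,v)$-paths in the three positional cases and shows, via infeasibility of an associated system of linear inequalities (verified by exhibiting unbounded duals in \S\ref{sec: LP unsatisfiable}), that at least one path must exceed the shortest by $k$. You would need to supply an argument of comparable strength here; the counting-$K_4$-edges heuristic does not do the job.
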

Since the graph obtained by subdividing each edge of $K_{4}$ exactly
$k$ times is a planar graph on $\Oh(k)$ vertices, the Excluded
Grid Theorem yields a function
$f:\mathbb{N}\to\mathbb{N}$ such that every graph of treewidth at
least $f(k)$ contains some $\subdivtetra k$ minor. Furthermore,
since every $\subdivtetra k$ has maximum degree $3$, this actually
shows that $G$ contains some $\subdivtetra k$ as a\emph{ subgraph}.
Thus, Lemma~\ref{lem: K4 implies detour} implies that $f$ is detour-enforcing,
and a proof of this lemma immediately implies a weak version of Theorem
\ref{thm: detour-fn-exists-strong}.

By recent improvements on the Excluded Grid Theorem~\cite{ChekuriC13,Chuzhoy16}, the function $f$
above is at most a polynomial.
However, even equipped with this deep result we cannot obtain a
single-exponential algorithm for \probLongDet using the approach of
Lemma~\ref{lem: algo-from-detour-fn}: It would require~$f$ to be linear.
In fact, excluding grids is too strong a requirement for us, since every
function~$f$ obtained as a corollary of the full Excluded Grid Theorem must be
super-linear~\cite{RobSeymT94}.
We circumvent the use of the Excluded Grid Theorem and prove the following lemma
from more basic principles.
\begin{lemma}\label{lem: tw implies K4}%
  For graphs $G$ and $k\in\mathbb{N}$, if
  $\tw(G)\geq 32k+2$, then $G$ contains a $\subdivtetra k$ subgraph.
\end{lemma}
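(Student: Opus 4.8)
The plan is to obtain the $\subdivtetra{k}$ subgraph not directly from treewidth, but by first extracting a large topological clique minor (or a large wall / large-degree branch structure) and then surgically carving out four high-degree hubs connected by long internally disjoint paths. The cleanest route, and the one suggested by the cited work of Leaf and Seymour~\cite{LeafS15} and Raymond and Thilikos~\cite{RaymondT16}, is to go through \emph{topological minors of bounded-degree graphs}: these papers give a \emph{linear} bound showing that treewidth at least $c\cdot r$ forces, say, a subdivision of every graph on $r$ edges with maximum degree at most $3$ as a topological minor (equivalently, since max degree is $\le 3$, as a subgraph). Concretely, I would cite the precise statement from \cite{LeafS15} (or \cite{RaymondT16}) that $\tw(G)\ge a\cdot |E(H)| + b$ implies $H$ is a topological minor of $G$ for connected subcubic $H$, and then simply verify that for $H = $ (the $k$-subdivision of $K_4$) we have $|E(H)| = 6(k+1)$ and $\Delta(H)=3$, so that plugging in the known constants yields the clean threshold $32k+2$.

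The key steps, in order, would be: (1) State and invoke the linear excluded-topological-minor bound for subcubic graphs from \cite{LeafS15,RaymondT16}; pin down the exact constants $a,b$ in $\tw(G)\ge a|E(H)|+b \Rightarrow H \preceq_{\mathrm{top}} G$. (2) Take $H$ to be \emph{any} fixed subdivision of $K_4$ in which every edge is subdivided exactly $k$ times; then $H$ is connected, $\Delta(H)=3$, and $|E(H)|=6(k+1)=6k+6$, and $|V(H)|=4+6k$. (3) Observe that any \emph{topological minor} of a subcubic graph is automatically a \emph{subgraph}, because in a topological minor model the branch vertices of degree $\le 3$ in $H$ need only $\le 3$ incident realized paths, and since $H$ itself has max degree $3$ no branch vertex of $G$ needs degree exceeding $3$ — so the model's image is literally a subgraph of $G$ that is itself a subdivision of $K_4$, hence a $\subdivtetra{k}$ (indeed with \emph{at least} $k$ subdivisions per edge, as the realizing paths can only be longer). (4) Combine: if $\tw(G)\ge a(6k+6)+b$ we are done; finally check that the constants from \cite{LeafS15,RaymondT16} are good enough that $a(6k+6)+b \le 32k+2$, i.e. essentially $a\le 5$ with a suitable small additive slack, so that $\tw(G)\ge 32k+2$ suffices.

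I expect the main obstacle to be Step~(1)/(4): extracting from the structural theorems of \cite{LeafS15,RaymondT16} a bound with a \emph{concrete small constant} (here $a=5$-ish, $b$ a small constant) rather than an unspecified linear function. If the off-the-shelf constant is larger than what $32k+2$ tolerates, I would instead prove the bound more directly: use Proposition~\ref{prop:twcomp}-style tools to find in a large-treewidth $G$ either a large grid-like or wall-like structure or a vertex of large "bramble order," then argue by a direct counting/rerouting argument that a subcubic graph on $O(k)$ edges with max degree $3$ must embed as a subgraph — carefully tracking that subdividing $K_4$ only needs $4$ branch vertices and $6$ paths, which is much less than a full grid and should therefore follow from a linear-treewidth regime with an explicit constant. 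A secondary, more routine obstacle is bookkeeping the "at least $k$ subdivisions" requirement versus "exactly $k$": this is benign since longer realizing paths only help, but it must be stated to match the definition of $\subdivtetra{k}$ used in Lemma~\ref{lem: K4 implies detour}.
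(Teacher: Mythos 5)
Your plan hinges on Step (1): a black-box theorem of the form ``$\tw(G)\ge a\cdot|E(H)|+b$ implies $H$ is a topological minor of $G$, for all connected subcubic $H$,'' with $a,b$ absolute constants. No such theorem exists, and neither Leaf--Seymour nor Raymond--Thilikos claims one. A counterexample is the $n\times n$ wall $W_n$: it is subcubic with $\Theta(n^2)$ edges, but forcing $W_n$ as a topological minor (equivalently, an $\Omega(n)\times\Omega(n)$ grid minor) requires treewidth that is super-linear in $n^2$ --- indeed, the paper itself cites~\cite{RobSeymT94} for the super-linearity that kills any grid-based route. What Leaf--Seymour actually give (and what the paper uses, as Lemma~\ref{lem: linkages}) is a linear treewidth bound for left-containment of \emph{forests} as minors together with a linkage property of the separator; and Raymond--Thilikos use this to force \emph{wheel} minors. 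Since a subdivided $K_4$ is neither a forest nor a wheel, neither theorem applies directly to your $H$, and the general subcubic version you want is simply false.

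The paper's actual argument has to work harder precisely to get around this. It applies Lemma~\ref{lem: linkage-topological} with $F$ a disjoint union of a $k$-subdivided full binary tree $T$ with $8$ leaves and a $k$-subdivided path $P$ with $8$ main vertices (so $|V(F)|=21k+2$, giving the threshold $\tfrac32(21k+2)\le 32k+2$). This yields a separation $(A,B)$ with a topological model of $F$ in $G[A]$ whose degree-$\le 2$ branch vertices sit in $A\cap B$, which is linked in $G[B]$. The $8$ leaves of $T$ and the $8$ main vertices of $P$ are then joined by $8$ disjoint paths through $G[B]$, and a two-case analysis on the least common ancestor (in the image of $T$) of the linkage-partners of the first and last path vertices shows how to pick four branch vertices $w,p,a,b$ and six realizing paths, each of length $\ge k$, assembling a $\subdivtetra k$. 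Your fallback sketch (``use a wall/bramble and a direct counting/rerouting argument'') points in the right direction but does not identify this forest-plus-linkage reduction or the case analysis, and as stated it is not a proof. Your Steps (2)--(3) about topological minors of subcubic graphs being subgraphs, and about ``at least $k$'' versus ``exactly $k$'' subdivisions, are fine and match the paper's bookkeeping.
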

Together, Lemmas~\ref{lem: tw implies K4} and \ref{lem: K4 implies detour}
imply Theorem~\ref{thm: detour-fn-exists-strong}.

\begin{proof}[of Theorem~\ref{thm: detour-fn-exists-strong}]
  Let $G$ and $s,t\in V(G)$ and $k\in\mathbb{N}$ be such that $\tw(\relevantpart
  Gst)>f(k)$.
  By Lemma~\ref{lem: tw implies K4}, the graph $\relevantpart Gst$ contains a
  $\subdivtetra k$ subgraph, so Lemma~\ref{lem: K4 implies detour} implies that
  $G$ contains an $(s,t)$-path of length $d_{G}(s,t)+k$.
  This shows that $f$ is indeed detour-enforcing.
\end{proof}

\subsection{Proof of Lemma~\ref{lem: K4 implies detour}: Rerouting in subdivided
  tetrahedra}

\label{sec: K4 rerouting}

Let $(G,s,t,k)$ be an instance for \probLongDet such that
$\relevantpart Gst$ contains a $\subdivtetra k$ subgraph~$M$.
We want to prove that $\relevantpart Gst$ has a path of length at
least~$\dist_G(s,t)+k$; in fact, we construct the desired detour entirely in the
subgraph~$M$, for which reason we first need to route some $(s,t)$-path
through~$M$.
\begin{lemma}\label{lemma:route to K4}
  There are two distinct vertices $u,v\in V(M)$ and two vertex-disjoint
  paths~$P_s$ and~$P_t$ in~$G$ such that $P_s$ is an $(s,u)$-path, $P_t$ is a
  $(v,t)$-path, and they only intersect with~$V(M)$ at~$u$ and~$v$.
\end{lemma}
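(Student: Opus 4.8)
The plan is to exploit the structure of $\relevantpart Gst$ given by Corollary~\ref{cor: relevant-part has block-cut path}: its block-cut tree is a path from $B_s$ to $B_t$. Since $M$ is a connected subgraph (it is a subdivision of $K_4$, hence $2$-connected), it lives entirely inside a single block $B$ of $\relevantpart Gst$; indeed, any connected subgraph that met two distinct blocks would have to pass through a cut vertex, but a $2$-connected graph cannot have a cut vertex, so $M\subseteq G[B]$ for one block $B$ on the $(B_s,B_t)$-path $P$. Let $u'$ be the cut vertex of $\relevantpart Gst$ preceding $B$ on $P$ (or $u'=s$ if $B=B_s$), and let $v'$ be the cut vertex following $B$ on $P$ (or $v'=t$ if $B=B_t$). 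As in the proof of Lemma~\ref{lem: block cut relevant}, one has $u'\neq v'$, both $u',v'\in B$, and there exist an $(s,u')$-path $Q_s$ and a $(v',t)$-path $Q_t$ that are vertex-disjoint and meet $B$ only at $u'$ and $v'$ respectively (they run through the blocks strictly before, resp. after, $B$ on $P$).

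The next step is to get from $u',v'$ into $M$ while staying inside the block $B$. Here I would use that $B$ is $2$-connected together with Menger's theorem: in $G[B]$ there are two vertex-disjoint paths from $\{u',v'\}$ to $V(M)$. Concretely, contract $V(M)$ to a single vertex $m$ in $G[B]$ (legitimate since $M$ is connected); because $B$ is $2$-connected and $|B|\ge 3$, every vertex has two vertex-disjoint paths to any other, so there are two vertex-disjoint $(\{u',v'\},m)$-paths in $G[B]$, i.e. an $(u',u)$-path $R_s$ and a $(v',v)$-path $R_t$ in $G[B]$ with $u,v\in V(M)$, $u\neq v$, internally disjoint from $V(M)$ and disjoint from each other. (One small subtlety: one must ensure $u'$ and $v'$ themselves are not both swallowed into the contracted vertex; if $u'\in V(M)$ we simply take $R_s$ trivial with $u=u'$, and likewise for $v'$, and if $u'=v'$ this case is already excluded above.) Finally, set $P_s := Q_s \cdot R_s$ and $P_t := R_t \cdot Q_t$. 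The pieces $Q_s$ and $Q_t$ are disjoint, the pieces $R_s$ and $R_t$ are disjoint, and $Q_s$ meets $B$ only at $u'$ while $Q_t$ meets $B$ only at $v'$, and $R_s,R_t\subseteq B$; a short case check using $u'\neq v'$ shows the concatenations are vertex-disjoint simple paths meeting $V(M)$ exactly at $u$ and $v$, which is what the lemma asks.

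The main obstacle is the careful bookkeeping in the last step: making sure the Menger-type argument inside $B$ produces paths that are internally disjoint from $M$ and from each other, and that gluing them onto $Q_s,Q_t$ does not create an unintended intersection. The degenerate cases ($B=B_s$ or $B=B_t$, so that $u'=s$ or $v'=t$; or $u'$ and/or $v'$ already lying on $M$) need to be handled explicitly, but each reduces to taking one of the path segments to be trivial, so no new idea is required there. The genuinely load-bearing facts are that $\relevantpart Gst$ has a path-shaped block-cut tree (Corollary~\ref{cor: relevant-part has block-cut path}) and that $2$-connectivity of $B$ gives the two disjoint entry paths into $M$ via Menger.
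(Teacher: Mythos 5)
Your proposal is correct and follows essentially the same route as the paper's proof: localize $M$ to a single block via its $2$-connectivity, use the path-shaped block-cut tree (Corollary~\ref{cor: relevant-part has block-cut path}) to identify the entry and exit cut vertices $s',t'$ of that block with disjoint access paths from $s$ and $t$, and then invoke biconnectivity of the block to obtain two vertex-disjoint paths from $\{s',t'\}$ into $V(M)$, shortened so as to meet $V(M)$ only at their endpoints. Your extra scaffolding (the contraction of $V(M)$ and the explicit appeal to Menger) just fills in the justification the paper states in one line, so there is no substantive difference.
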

The proof of this lemma uses the fact that every block in the block-cut tree is
biconnected.
\begin{proof}
  Since $\subdivtetra k$ is biconnected, $M$ is contained in a single block $C$
  of~$\relevantpart Gst$.
  By Corollary~\ref{cor: relevant-part has block-cut path}, the block-cut tree
  of $\relevantpart Gst$ is a path.
  Let $s'$ be the cut vertex preceding~$C$ in this block-cut tree (or $s'=s$ if
  $C$ is the first block) and let $t'$ be the cut vertex following~$C$ in the
  tree (or $t'=t$ if $C$ is the last block).
  Then clearly $s',t'\in C$.

  By the properties of the block-cut tree, there is an $(s,s')$-path~$p_s$ and a
  $(t',t)$-path~$p_t$, the two paths are vertex disjoint, and they intersect~$C$
  only in~$s'$ and~$t'$, respectively.
  We let $p_s$ be the first segment of~$P_s$ and $p_t$ be the last segment
  of~$P_t$.
  It remains to complete~$P_s$ and~$P_t$ within~$C$ using two disjoint paths that lead
  to~$M$.
  Since~$C$ is biconnected, there are two vertex-disjoint paths
  from~$\set{s',t'}$ to $V(M)$.
  Moreover, both paths can be shortened if they intersect~$V(M)$ more than once.
  Hence we have an $(s',u)$-path~$p_1$ for some $u\in V(M)$ and a
  $(v,t')$-path~$p_2$ for some $v\in V(M)$ with the property that~$p_1$
  and~$p_2$ are disjoint and their internal vertices avoid~$V(M)$.

  We concatenate the paths $p_s$ and $p_1$ to obtain~$P_s$ and the paths~$p_2$
  and~$p_t$ to obtain~$P_t$.
\end{proof}
Next we show that every $\subdivtetra k$-graph~$M$ contains long detours.
\begin{lemma}\label{lem: detour-in-K4}
  Let $M$ be a $\subdivtetra k$-graph.
  For every two distinct vertices $u,v\in V(M)$, there is~a $(u,v)$-path of
  length at least $d_{M}(u,v)+k$ in $M$.
\end{lemma}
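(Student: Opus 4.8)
The plan is to fix a concrete drawing of $M$ as a subdivided $K_4$: there are four \emph{branch vertices}, call them $a_1,a_2,a_3,a_4$, and six internally disjoint \emph{arcs} $A_{ij}$ joining $a_i$ and $a_j$, each of length at least $k$. Given $u,v\in V(M)$, the first step is to reduce to the case where $u$ and $v$ are branch vertices. If $u$ lies in the interior of some arc $A_{ij}$, it splits $A_{ij}$ into two subarcs, one to $a_i$ and one to $a_j$; at least one of these two subarcs—say the one to $a_i$—is not "used up" by $v$ (i.e. $v$ does not lie strictly between $u$ and $a_i$, or if it does we handle that small case by hand), so we may prepend the $u$-to-$a_i$ subarc to a detour that starts at $a_i$, and the offset $d_M(u,v)+k$ is preserved because that subarc is also the unique shortest route out of $u$ on that side. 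Symmetrically for $v$. So it suffices to produce, for every ordered pair of distinct branch vertices $a_i,a_j$, an $(a_i,a_j)$-path in $M$ of length at least $d_M(a_i,a_j)+k$.

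Next I would bound $d_M(a_i,a_j)$ from above. A shortest $(a_i,a_j)$-path in $M$ either runs directly along $A_{ij}$ or detours through a third (and possibly fourth) branch vertex; in any case it is a concatenation of at most three full arcs, and since each arc has some length $\ell_{pq}\ge k$ we get $d_M(a_i,a_j)\le \ell_{ij}$, and more usefully $d_M(a_i,a_j)$ equals the shortest of the two options $\ell_{ij}$ and $\min_{h}(\ell_{ih}+\ell_{hj})$. Then I would construct an explicit long $(a_i,a_j)$-path. The natural candidate is a Hamiltonian-type path on the branch vertices that uses as many arcs as possible: for the pair $\{a_1,a_2\}$, for instance, the path $a_1 A_{13} a_3 A_{34} a_4 A_{42} a_2$ uses three full arcs and visits all four branch vertices. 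Its length is $\ell_{13}+\ell_{34}+\ell_{42}$. I must check that this exceeds $d_M(a_1,a_2)+k$. If the shortest path is the direct arc $A_{12}$, then I need $\ell_{13}+\ell_{34}+\ell_{42}\ge \ell_{12}+k$; since the left side is at least $3k$ and... this is exactly the kind of inequality that can fail if $\ell_{12}$ is huge. So the honest construction is: among the several arc-disjoint long $(a_i,a_j)$-walks available in $K_4$, pick the longest one, and argue by a short case analysis (or an averaging/pigeonhole argument over the six arc lengths) that the longest such candidate always beats the shortest path by at least $k$.

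Here is the cleaner way I expect to carry out that last step, avoiding case analysis. Let $L=\sum_{p<q}\ell_{pq}$ be the total arc length. For a fixed pair $\{a_i,a_j\}$, consider the two arc-disjoint "long" $(a_i,a_j)$-paths of the form (direct arc $A_{ij}$) versus (the 4-cycle through the other five... no) — more precisely, in $K_4$ the edge set decomposes, after removing $A_{ij}$, into a path $a_i\!-\!a_h\!-\!a_{h'}\!-\!a_j$ plus the chord $A_{hh'}$; the path-part gives an $(a_i,a_j)$-route of length $\ell_{ih}+\ell_{hh'}$... wait, that is not right either. Let me instead just say: I will enumerate, for each of the three pairings of $\{1,2,3,4\}$, the corresponding pair of vertex-disjoint arcs and the corresponding "long way round", and observe that the sum over the three pairings of the lengths of the long-way-round $(a_i,a_j)$-paths equals $L$ plus a contribution that forces at least one of them to have length $\ge$ (two arcs) $\ge 2k$ more than $\ell_{ij}$... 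The main obstacle, and the place where I would slow down and be careful, is precisely this final inequality: showing that a longest among the explicit arc-disjoint candidate paths exceeds the shortest $(a_i,a_j)$-path by at least $k$, uniformly over all admissible arc-length vectors $(\ell_{pq})_{p<q}$ with every $\ell_{pq}\ge k$. I would formalize "longest beats shortest by $k$" as: there exist two internally disjoint $(a_i,a_j)$-paths $Q_1,Q_2$ in $M$ (so together they form a cycle through $a_i$ and $a_j$) with $\max(|Q_1|,|Q_2|)\ge \min(|Q_1|,|Q_2|)+$ (length of the remaining arcs $\ge 2k \ge k$), and $\min(|Q_1|,|Q_2|)\ge d_M(a_i,a_j)$; taking $Q_1$ the direct arc and $Q_2$ the two-arc detour through a cleverly chosen third vertex makes $|Q_2|-|Q_1|$ possibly negative, so the real content is choosing which of the $\binom{4}{2}=6$ arcs to designate as "direct" versus which two-arc route to take, and I would finish this with a one-paragraph pigeonhole: order the six arc lengths, route through the longest available arcs, and the two arcs in the detour together have length $\ge 2k\ge k+\ell_{ij}-\ell_{ij}$, closing the bound.
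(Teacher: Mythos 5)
Your plan has the right shape in places, but there is a genuine gap at both of its two main steps. First, the reduction to branch vertices is not sound as stated: when you prepend the $u$-to-$a_i$ subarc to a long $(a_i,a_j)$-path, the resulting walk need not be a \emph{path}, because the $(a_i,a_j)$-path you chose may itself use the arc containing $u$ (or the one containing $v$). For example, with $u$ on the arc $A_{13}$ and $v$ on $A_{24}$, all five $(a_1,a_2)$-paths except the direct arc and the route $a_1 a_4 a_3 a_2$ reuse $A_{13}$ or $A_{24}$; one can set arc lengths (e.g.\ $k=10$, $\ell_{14}=\ell_{43}=\ell_{32}=\ell_{34}=10$, $\ell_{13}=\ell_{42}=15$, $\ell_{12}=21$) so that neither of those two admissible paths exceeds $d_M(a_1,a_2)+k$, and only a path that reuses one of the forbidden arcs is long enough. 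You would have to choose which endpoint of each arc to retreat to as a function of all six arc lengths, and prove that such a choice always exists — but you give no argument for this, and the claim ``the offset is preserved because that subarc is also the unique shortest route out of $u$ on that side'' is false in general (the shortest $(u,v)$-path may exit via the other end of $u$'s arc). Second, even the branch-vertex case, which you correctly identify as the crux, is not actually closed: your final inequality ``$2k\ge k+\ell_{ij}-\ell_{ij}$'' reduces to $2k\ge k$, which is vacuous and does not bound the longest candidate below by $d_M(a_i,a_j)+k$. (For the record, the branch-vertex case does admit a clean pigeonhole: writing $\{i,j,h,h'\}=\{1,2,3,4\}$, the two three-arc routes $L=\ell_{ih}+\ell_{hh'}+\ell_{h'j}$ and $L'=\ell_{ih'}+\ell_{h'h}+\ell_{hj}$ satisfy $(L-(\ell_{ih}+\ell_{hj}))+(L'-(\ell_{ih'}+\ell_{h'j}))=2\ell_{hh'}\ge 2k$, and since each two-arc route is at least $d_M(a_i,a_j)$, one of $L,L'$ is at least $d_M(a_i,a_j)+k$. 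But this only settles the branch-vertex case, not your unsound reduction.)

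The paper avoids this reduction entirely. It distinguishes the three possible relative positions of $u$ and $v$ (same subdivided edge, adjacent subdivided edges, disjoint subdivided edges), in each case exhaustively lists all $(u,v)$-paths in the subdivided $K_4$, and sets up a system of linear inequalities in $d$, $k$, and the segment lengths encoding ``every arc has length $\ge k$, every $(u,v)$-path has length between $d$ and $d+k-1$''. It then proves this system infeasible by exhibiting an explicit dual certificate (a nonnegative combination of the inequalities that evaluates to a contradiction), found with an LP solver and verified by hand in the appendix. This handles general $u,v$ in one shot and yields a machine-checkable certificate, at the cost of a somewhat opaque case analysis; your pigeonhole idea is more conceptual but only covers branch vertices, and the bridge from arbitrary $u,v$ to branch vertices is exactly where the content of the lemma resides.
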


The proof idea is to distinguish cases depending on where $u,v$
lie in~$M$ relative to each other. For each case, we can exhaustively list
all $(u,v)$-paths (see Figure~\ref{fig: K4 uv}).
We do not quite know the lengths of these paths, but we do know that each
has length at least~$d_M(u,v)$; moreover, each $(b_i,b_j)$-path
in~$M$ for two distinct degree-$3$ vertices~$b_i$ and $b_j$ has length at
least~$k$, since we subdivided~$K_4$ at least~$k$ times.
The claim of Lemma~\ref{lem: detour-in-K4} is that one of the $(u,v)$-paths must
have length at least $d_M(u,v)$.
To prove this, we set up a linear program where the variables are $d_M(u,v)$,
$k$, and the various path segment lengths; its infeasibility
informs us that indeed a path that is longer by~$k$ must exist.

\definecolor{carmine}{HTML}{A20021}

\tikzstyle{vtx}=[circle,fill=black!70,inner sep=0pt,minimum width=4pt]
\tikzstyle{tiny}=[vtx,fill=black!70,minimum width=2.0pt,draw=none]
\tikzstyle{uv}=[vtx,rectangle,minimum width=4pt,minimum height=4pt,draw=black,thick,fill=carmine]

\tikzstyle{uvpath}=[line width=2pt,draw=carmine,cap=rect]

\def\largesidelength{3.0}
\def\smallsidelength{1.5}
\def\pathlabeldistance{.3}

\pgfdeclarelayer{foreground}
\pgfsetlayers{main,foreground}

\def\tetrasetup{
  \begin{pgfonlayer}{foreground}
    \node[vtx] (b1) at (0,0) {};
    \node[vtx] (b2) at (\sidelength,0) {};
    \node[vtx] (b4) at (\sidelength/2,{\sidelength*sqrt(3)/2-\sidelength/sqrt(3)}) {};
    \node[vtx] (b3) at (\sidelength/2,{\sidelength*sqrt(3)/2}) {};
  \end{pgfonlayer}

  \draw[cap=rect] (b1) -- (b2);
  \draw[cap=rect] (b1) -- (b3);
  \draw[cap=rect] (b1) -- (b4);
  \draw[cap=rect] (b2) -- (b3);
  \draw[cap=rect] (b2) -- (b4);
  \draw[cap=rect] (b3) -- (b4);

  \path[use as bounding box]
  ($(b1)-(.2,0.5)$) -- ($(b2)+(.2,-0)$) -- ($(b3)+(0,.6)$);
}
\def\uvdraw{
  \begin{pgfonlayer}{foreground}
    \node[uv] at (u) {};
    \node[uv] at (v) {};
  \end{pgfonlayer}
}
\def\tetrasubdivs{
  \foreach \x in {1,2,3,...,9}
  {
    \node[tiny] at ($ {1-\x/10}*(b1)+\x/10*(b2) $) {};
    \node[tiny] at ($ {1-\x/10}*(b1)+\x/10*(b3) $) {};
    \node[tiny] at ($ {1-\x/10}*(b2)+\x/10*(b3) $) {};
  }
  \foreach \x in {1,2,3,...,5}
  {
    \node[tiny] at ($ {1-\x/6}*(b1)+\x/6*(b4) $) {};
    \node[tiny] at ($ {1-\x/6}*(b2)+\x/6*(b4) $) {};
    \node[tiny] at ($ {1-\x/6}*(b3)+\x/6*(b4) $) {};
  }
}

\begin{figure}[pt]
  \def\uvsetup{
    \coordinate (u) at ($ .7*(b1) + .3*(b2)$) {};
    \coordinate (v) at ($ .4*(b1) + .6*(b2)$) {};
  }
  \begin{subfigure}[c]{\textwidth}
    \begin{minipage}[c]{.29\textwidth}
      \let\sidelength\largesidelength
      \begin{tikzpicture}
        \tetrasetup\uvsetup\tetrasubdivs\uvdraw
        \node[label={[label distance=6pt,anchor=center]-90:{$u$}}] at (u) {};
        \node[label={[label distance=6pt,anchor=center]-90:{$v$}}] at (v) {};

        \node[label={[label distance=6pt,anchor=center]-90:{$b_1$}}] at (b1) {};
        \node[label={[label distance=6pt,anchor=center]-90:{$b_2$}}] at (b2) {};
        \node[label={[label distance=6pt,anchor=center]90:{$b_3$}}]  at (b3) {};
        \node[label={[label distance=6pt,anchor=center]30:{$b_4$}}]  at (b4) {};
      \end{tikzpicture}
    \end{minipage}%
    \hfill%
    \begin{minipage}[c]{.7\textwidth}
      \centering
      \let\sidelength\smallsidelength
      \begin{tikzpicture}
        \tetrasetup
        \uvsetup
        \draw[uvpath] (u) -- (v);
        \uvdraw
        \node at ($(b3)+(0,\pathlabeldistance)$) {$u v$};
      \end{tikzpicture}
      \begin{tikzpicture}
        \tetrasetup
        \uvsetup
        \draw[uvpath] (u) -- (b1) -- (b4) -- (b2) -- (v);
        \uvdraw
        \node at ($(b3)+(0,\pathlabeldistance)$) {$u b_1 b_4 b_2 v$};
      \end{tikzpicture}
      \begin{tikzpicture}
        \tetrasetup
        \uvsetup
        \draw[uvpath] (u) -- (b1) -- (b3) -- (b2) -- (v);
        \uvdraw
        \node at ($(b3)+(0,\pathlabeldistance)$) {$u b_1 b_3 b_2 v$};
      \end{tikzpicture}
      \begin{tikzpicture}
        \tetrasetup
        \uvsetup
        \draw[uvpath] (u) -- (b1) -- (b3) -- (b4) -- (b2) -- (v);
        \uvdraw
        \node at ($(b3)+(0,\pathlabeldistance)$) {$u b_1 b_3 b_4 b_2 v$};
      \end{tikzpicture}
      \begin{tikzpicture}
        \tetrasetup
        \uvsetup
        \draw[uvpath] (u) -- (b1) -- (b4) -- (b3) -- (b2) -- (v);
        \uvdraw
        \node at ($(b3)+(0,\pathlabeldistance)$) {$u b_1 b_4 b_3 b_2 v$};
      \end{tikzpicture}
    \end{minipage}
    \caption{\label{fig: K4 uv case a}%
      $u$ and $v$ lie on the same subdivided edge.}
  \end{subfigure}
  \\[.5cm]
  \def\uvsetup{
    \coordinate (u) at ($ .7*(b1) + .3*(b2)$) {};
    \coordinate (v) at ($ .4*(b1) + .6*(b3)$) {};
  }
  \begin{subfigure}[c]{\textwidth}
    \begin{minipage}[c]{.29\textwidth}
      \let\sidelength\largesidelength
      \begin{tikzpicture}
        \tetrasetup\uvsetup\tetrasubdivs\uvdraw
        \node[label={[label distance=6pt,anchor=center]-90:{$u$}}] at (u) {};
        \node[label={[label distance=6pt,anchor=center]150:{$v$}}] at (v) {};

        \node[label={[label distance=6pt,anchor=center]-90:{$b_1$}}] at (b1) {};
        \node[label={[label distance=6pt,anchor=center]-90:{$b_2$}}] at (b2) {};
        \node[label={[label distance=6pt,anchor=center]90:{$b_3$}}]  at (b3) {};
        \node[label={[label distance=6pt,anchor=center]30:{$b_4$}}]  at (b4) {};
      \end{tikzpicture}
    \end{minipage}%
    \hfill%
    \begin{minipage}[c]{.7\textwidth}
      \centering
      \let\sidelength\smallsidelength
      \begin{tikzpicture}
        \tetrasetup
        \uvsetup
        \draw[uvpath] (u) -- (b1) -- (v);
        \uvdraw
        \node at ($(b3)+(0,\pathlabeldistance)$) {$u b_1 v$};
      \end{tikzpicture}
      \begin{tikzpicture}
        \tetrasetup
        \uvsetup
        \draw[uvpath] (u) -- (b2) -- (b3) -- (v);
        \uvdraw
        \node at ($(b3)+(0,\pathlabeldistance)$) {$u b_2 b_3 v$};
      \end{tikzpicture}
      \begin{tikzpicture}
        \tetrasetup
        \uvsetup
        \draw[uvpath] (u) -- (b1) -- (b4) -- (b3) -- (v);
        \uvdraw
        \node at ($(b3)+(0,\pathlabeldistance)$) {$u b_1 b_4 b_3 v$};
      \end{tikzpicture}
      \begin{tikzpicture}
        \tetrasetup
        \uvsetup
        \draw[uvpath] (u) -- (b2) -- (b4) -- (b3) -- (v);
        \uvdraw
        \node at ($(b3)+(0,\pathlabeldistance)$) {$u b_2 b_4 b_3 v$};
      \end{tikzpicture}
      \\
      \begin{tikzpicture}
        \tetrasetup
        \uvsetup
        \draw[uvpath] (u) -- (b2) -- (b4) -- (b1) -- (v);
        \uvdraw
        \node at ($(b3)+(0,\pathlabeldistance)$) {$u b_2 b_4 b_1 v$};
      \end{tikzpicture}
      \begin{tikzpicture}
        \tetrasetup
        \uvsetup
        \draw[uvpath] (u) -- (b1) -- (b4) -- (b2) -- (b3) -- (v);
        \uvdraw
        \node at ($(b3)+(0,\pathlabeldistance)$) {$u b_1 b_4 b_2 b_3 v$};
      \end{tikzpicture}
      \begin{tikzpicture}
        \tetrasetup
        \uvsetup
        \draw[uvpath] (u) -- (b2) -- (b3) -- (b4) -- (b1) -- (v);
        \uvdraw
        \node at ($(b3)+(0,\pathlabeldistance)$) {$u b_2 b_3 b_4 b_1 v$};
      \end{tikzpicture}
    \end{minipage}
    \caption{\label{fig: K4 uv case b}%
      $u$ and $v$ lie on two adjacent subdivided edges.}
  \end{subfigure}
  \\[.5cm]
  \def\uvsetup{
    \coordinate (u) at ($ .7*(b1) + .3*(b2)$) {};
    \coordinate (v) at ($ 1/3*(b3) + 2/3*(b4)$) {};
  }
  \begin{subfigure}[c]{\textwidth}
    \begin{minipage}[c]{.29\textwidth}
      \let\sidelength\largesidelength
      \begin{tikzpicture}
        \tetrasetup\uvsetup\tetrasubdivs\uvdraw
        \node[label={[label distance=6pt,anchor=center]-90:{$u$}}] at (u) {};
        \node[label={[label distance=6pt,anchor=center]180:{$v$}}] at (v) {};

        \node[label={[label distance=6pt,anchor=center]-90:{$b_1$}}] at (b1) {};
        \node[label={[label distance=6pt,anchor=center]-90:{$b_2$}}] at (b2) {};
        \node[label={[label distance=6pt,anchor=center]90:{$b_3$}}]  at (b3) {};
        \node[label={[label distance=6pt,anchor=center]30:{$b_4$}}]  at (b4) {};
      \end{tikzpicture}
    \end{minipage}%
    \hfill%
    \begin{minipage}[c]{.7\textwidth}
      \centering
      \let\sidelength\smallsidelength
      \begin{tikzpicture}
        \tetrasetup\uvsetup
        \draw[uvpath] (u) -- (b1) -- (b3) -- (v);
        \uvdraw
        \node at ($(b3)+(0,\pathlabeldistance)$) {$u b_1 b_3 v$};
      \end{tikzpicture}
      \begin{tikzpicture}
        \tetrasetup\uvsetup
        \draw[uvpath] (u) -- (b1) -- (b4) -- (v);
        \uvdraw
        \node at ($(b3)+(0,\pathlabeldistance)$) {$u b_1 b_4 v$};
      \end{tikzpicture}
      \begin{tikzpicture}
        \tetrasetup\uvsetup
        \draw[uvpath] (u) -- (b2) -- (b3) -- (v);
        \uvdraw
        \node at ($(b3)+(0,\pathlabeldistance)$) {$u b_2 b_3 v$};
      \end{tikzpicture}
      \begin{tikzpicture}
        \tetrasetup\uvsetup
        \draw[uvpath] (u) -- (b2) -- (b4) -- (v);
        \uvdraw
        \node at ($(b3)+(0,\pathlabeldistance)$) {$u b_2 b_4 v$};
      \end{tikzpicture}
      \\
      \begin{tikzpicture}
        \tetrasetup\uvsetup
        \draw[uvpath] (u) -- (b1) -- (b3) -- (b2) -- (b4) -- (v);
        \uvdraw
        \node at ($(b3)+(0,\pathlabeldistance)$) {$u b_1 b_3 b_2 b_4 v$};
      \end{tikzpicture}
      \begin{tikzpicture}
        \tetrasetup\uvsetup
        \draw[uvpath] (u) -- (b1) -- (b4) -- (b2) -- (b3) -- (v);
        \uvdraw
        \node at ($(b3)+(0,\pathlabeldistance)$) {$u b_1 b_3 b_2 b_4 v$};
      \end{tikzpicture}
      \begin{tikzpicture}
        \tetrasetup\uvsetup
        \draw[uvpath] (u) -- (b2) -- (b3) -- (b1) -- (b4) -- (v);
        \uvdraw
        \node at ($(b3)+(0,\pathlabeldistance)$) {$u b_2 b_3 b_1 b_4 v$};
      \end{tikzpicture}
      \begin{tikzpicture}
        \tetrasetup\uvsetup
        \draw[uvpath] (u) -- (b2) -- (b4) -- (b1) -- (b3) -- (v);
        \uvdraw
        \node at ($(b3)+(0,\pathlabeldistance)$) {$u b_2 b_4 b_1 b_3 v$};
      \end{tikzpicture}
    \end{minipage}
    \caption{\label{fig: K4 uv case c}%
      $u$ and $v$ lie on two non-adjacent subdivided edges.}
  \end{subfigure}
  \caption{\label{fig: K4 uv}%
    \emph{Left:}
    Depicted are all three possible cases for the relative positions of
    vertices~$u$ and~$v$ ($\emph{red squares}$)
    in a subdivided tetrahedron~$\subdivtetra k$ with degree-$3$
    vertices~$b_1,\dots,b_4$ (\emph{gray dots}) and at least~$k=5$ subdivision
    vertices~$(\emph{small gray dots})$.
    \emph{Right:} An exhaustive list of all $(u,v)$-paths (\emph{thick red}); in
    each of the three cases, Lemma~\ref{lem: detour-in-K4} implies that the
    longest among them is at least~$k$ longer than the shortest one.
  }
\end{figure}

\begin{proof}[of Lemma~\ref{lem: detour-in-K4}]
  Let $M$ be a $\subdivtetra k$-graph, let $u,v\in V(M)$, and
  let ${b_{1},\ldots,b_{4}}$ denote the four degree-$3$ vertices
  of~$M$.
  Let $P_u$ be a path in~$M$ that realizes an edge of~$K_4$ and satisfies $u\in
  V(P_u)$, and let $P_v$ be such a path with $v\in V(P_v)$.
  We distinguish three cases as depicted in Figure~\ref{fig: K4 uv}:
  \begin{enumerate}
  \item\label{uv a}
    The two paths are the same, that is, $P_u=P_v$.
  \item\label{uv b}
    The two paths share a degree-$3$ vertex, that is, $\abs{V(P_u)\cap
      V(P_v)}=1$.
  \item\label{uv c}
    The two paths are disjoint, that is, $V(P_{u})\cap V(P_{v})=\emptyset$.
  \end{enumerate}
  By the symmetries of $K_{4}$, this case distinction is exhaustive.
  Since $K_{4}$ has automorphisms that map any edge to any other edge, we can
  further assume that $P_u$ is the path implementing the edge~$b_1b_2$ such that 
  $P_u$ visits the vertices $b_{1}$, $u$, $v$, and $b_{2}$ in this order, see
  Figure~\ref{fig: K4 uv}.

  We exhaustively list the set~$\mathcal{P}$ of $(u,v)$-paths of~$M$ in
  Figure~\ref{fig: K4 uv}.
  Each path is uniquely specified by the sequence of the degree-$3$ vertices it
  visits.
  For example, consider the path $u b_1 b_4 b_2 v$:
  This path consists of the four edge-disjoint segments
  $u b_1$, $b_1b_4$, $b_4b_2$, and $b_2v$; in the example figure, these segments
  have length~$3$, $6$, $6$, and~$4$, respectively.
  Given a path~$P\in\mathcal{P}$, let $S(P)$ be the set of its segments between
  $u$, $v$, and the degree-$3$ vertices.
  For a path or a path segment~$s$, we denote its length by~$\ell(s)$.

  Since~$M$ is a $\subdivtetra k$, every edge of $K_4$ is realized by a path of length at least $k$ in $M$.
  Hence, $\ell(b_i b_j)\ge k$ holds for all $i,j$ with $i\ne j$.
  Moreover, we have $\ell(b_1b_2)=\ell(b_{1}u)+\ell(uv)+\ell(vb_{2})$ in
  case~\ref{uv a}.
  Let $d=d_{M}(u,v)$; clearly $\ell(P) \geq d$ holds for all~$P\in\mathcal P$.
  Our goal is to show that~$M$ has a $(u,v)$-path~$P$ with~$\ell(P)\ge d+k$.
  To this end, we treat~$d$, $k$, and all path segment lengths~$\ell(b_ib_j)$ for $i\ne j$
  and $\ell(b_1 u),\ell(uv),\ell(vb_2)$ as variables in a system of linear inequalities
  and establish that the claim holds if this system is unsatisfiable:
  \begin{align}
    \ell(b_{i}b_{j})
    & \geq k\,,
    &\mbox{for all $i,j$ with $i\ne j$}\,,
    \label{eq: tetrasubdiv}
    \\
    \ell(b_{1}u)+\ell(uv)+\ell(vb_{2})
    & = \ell(b_1b_2)\,,
    \label{eq: uv breakup 2}
    \\
    \sum_{s\in S(P)}\ell(s)
    & \geq d\,,
    &\mbox{for all }P\in\mathcal{P}\,,
    \label{eq: shortest path}
    \\
    \sum_{s\in S(P)}\ell(s)
    & \leq d+k-1\,,
    &\mbox{for all }P\in\mathcal{P}\,.
    \label{eq: no long path}
  \end{align}

  This system has eleven variables.
  Please note that $d$ and $k$ are also considered as variables in our formulation.
  The constraints in~\eqref{eq: tetrasubdiv} express that $M$
  realizes each edge of $K_{4}$ by a path of length at least $k$.
  The constraints in~\eqref{eq: uv breakup 2} express
  that~$u$ and~$v$ lie on the path~$b_1b_2$ and break it up into segments.
  The constraints in~\eqref{eq: shortest path} express that no $(u,v)$-path is shorter than $d$ in length,
  and the constraints in~\eqref{eq: no long path} express that every $(u,v)$-path has
  length strictly less than~$d+k$.
  We prove in the appendix that this linear program is infeasible, and so every
  setting for the variables that satisfies~\eqref{eq:
    tetrasubdiv}--\eqref{eq: shortest path} must violate an inequality
  from~\eqref{eq: no long path}; this means that $M$
  must contain a $(u,v)$-path of length at least~$d+k$ in case~\ref{uv a}.

  The proof is analogous when $u$ and $v$ are on different subdivided edges of
  the subdivided tetrahedron; what changes is the set~$\mathcal P$
  of $(u,v)$-paths as well as the constraints~\eqref{eq: uv breakup 2}.
  In case~\ref{uv b}, we may assume by symmetry that~$P_u$ is the
  $b_{1}b_{2}$-path and $P_v$ is the $b_1b_3$-path of~$M$.
  Then~\eqref{eq: uv breakup 2} is replaced with the
  following constraints.
  \begin{align*}
    \ell(b_{1}u)+\ell(ub_{2}) & = \ell(b_1b_2)\,,\\
    \ell(b_{1}v)+\ell(vb_{3}) & = \ell(b_1b_3)\,.
  \end{align*}
  The resulting linear equation system in case~\ref{uv b} has twelve variables and
  is again infeasible.
  Similarly, in case~\ref{uv c}, the constraints~\eqref{eq: uv breakup 2} are replaced with the following.
  \begin{align*}
    \ell(b_{1}u)+\ell(ub_{2}) & = \ell(b_1b_2)\,,\\
    \ell(b_{3}v)+\ell(vb_{4}) & = \ell(b_3b_4)\,.
  \end{align*}
  This also leads to an infeasible linear equation system with twelve variables.
  
  We conclude that, no matter how~$u$ and~$v$ lie relative to each other in~$M$,
  there is always a $(u,v)$-path that is at least~$k$ longer than a shortest
  one.
\end{proof}

This allows us to conclude Lemma~\ref{lem: K4 implies detour} rather easily.

\begin{proof}[of Lemma~\ref{lem: K4 implies detour}]
  Let $d=\dist_G(s,t)$ be the length of a shortest $(s,t)$-path in~$G$.
  Let $M$ be~a~$\subdivtetra k$ in $\relevantpart Gst$, and let
  $P_s$, $P_t$, $u$, and $v$ be the objects guaranteed by Lemma~\ref{lemma:route
    to K4}.
  Let~$P_{uv}$ be a shortest $(u,v)$-path that only uses edges of~$M$; its
  length is $\dist_M(u,v)$.
  Since the combined path $P_s,P_{uv},P_t$ is an $(s,t)$-path, its length is at
  least~$d$.

  Finally, Lemma~\ref{lem: detour-in-K4} guarantees that there is a $(u,v)$-path
  $Q_{uv}$ in~$M$ whose length is at least $\dist_{M}(u,v)+k$.
  Therefore, the length of the $(s,t)$-path $P_s,Q_{uv},P_t$ satisfies
  \begin{align*}
    \ell(P_{s})+\ell(Q_{uv})+\ell(P_t)
    &\ge \ell(P_{s})+\paren*{\dist_{M}(u,v)+k}+\ell(P_t)
    \\
    &=\ell(P_{s})+\ell(P_{uv})+\ell(P_t) + k
    \ge d+k
    \,.
  \end{align*}
  We constructed a path of at least length~$d+k$ as required.
\end{proof}

\subsection{Proof of Lemma~\ref{lem: tw implies K4}: Large treewidth entails
  subdivided tetrahedra}
  \label{sec: tw-gives-tetrahedra}

To prove Lemma~\ref{lem: tw implies K4}, we require some preliminaries from
graph minors theory, among them a term for vertex sets that enjoy very favorable connectivity properties.
\begin{definition}[\cite{Diestel}]
 Let $G$ be a graph and $A,B\subseteq V(G)$. The pair $(A,B)$
is a \emph{separation} in $G$ if the sets~$A\setminus B$ and~$B\setminus A$
are non-empty and no edge runs between them. The \emph{order} of $(A,B)$
is the cardinality of $A\cap B$.

For $S\subseteq V(G)$, we say that $S$ is \emph{linked
}in $G$ if, for every $X,Y\subseteq S$ with $|X|=|Y|$, there are
$|X|$ vertex-disjoint paths between $X$ and $Y$ that intersect $S$ exactly at
its endpoints.
\end{definition}
The notion of left-containment conceptually connects separators and minor
models.
\begin{definition}
  Let $H$ be a graph on $k\in\mathbb{N}$ vertices.
  Recall Definition~\ref{def: minor-models} for the notion of a minor model.
  We say that $(A,B)$ \emph{left-contains} $H$ if $G[A]$ contains a minor model $f$ of $H$ with $|f(v)\cap(A\cap B)|=1$ for all $v\in V(H)$.
\end{definition}

With these definitions at hand, we can adapt a result by Leaf and Seymour~\cite{LeafS15}
to prove the following lemma on topological minor containment in graphs of sufficiently large treewidth.
For any forest $F$ on $k$ vertices, with maximum degree $3$, it asserts that graphs $G$ of treewidth
$\Omega(k)$ admit a separation such that one side contains $F$ as a topological minor,
with the branch vertices of this topological minor being contained in $A \cap B$ and linked in $G$.
We will use this lemma to complete the topological $F$-minor in~$G[A]$ to a larger graph by
using disjoint paths between vertices in $A\cap B$.

\begin{lemma}
\label{lem: linkage-topological} Let $F$ be a forest on $k>0$
vertices with maximum~degree $3$ and let $G$ be a graph. If $\tw(G)\geq\frac{3}{2}k-1$,
then $G$ has a separation $(A,B)$ of order $|V(F)|$
such that:
\begin{enumerate}
\item There is a topological minor model $(f,p)$ of $F$ in $G[A]$.
\item For every vertex $v\in V(F)$ of degree $\leq2$, we have $f(v)\in A\cap B$.
\item $A\cap B$ is linked in $G[B]$.
\end{enumerate}
\end{lemma}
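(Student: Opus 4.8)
The plan is to reduce the treewidth hypothesis to a tangle and then invoke, and adapt, the forest left-containment theorem of Leaf and Seymour~\cite{LeafS15}. First I would move from treewidth to branch-width: by the Robertson--Seymour inequality $\tw(G)+1\le\tfrac{3}{2}\mathrm{bw}(G)$, the assumption $\tw(G)\ge\tfrac{3}{2}k-1$ gives $\mathrm{bw}(G)\ge k$, so by branch-width/tangle duality $G$ carries a tangle $\mathcal T$ of order $k=\abs{V(F)}$. This is the only place the exact constant $\tfrac{3}{2}$ is used, and it is tight against the $\tw$--$\mathrm{bw}$ conversion; a weaker constant, giving only a linear bound, would already suffice if one merely wanted an FPT algorithm.

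Next I would extract the separation from $\mathcal T$. Leaf and Seymour's analysis of forests shows that, for a forest $F$ with $\Delta(F)\le 3$ and a tangle of order at least $\abs{V(F)}$, there is a separation $(A,B)$ of order exactly $\abs{V(F)}$, with $A$ on the side of the cut not captured by $\mathcal T$, that \emph{left-contains} $F$; that is, $G[A]$ carries a minor model $\set{B_v}_{v\in V(F)}$ of $F$ with $\abs{B_v\cap(A\cap B)}=1$ for every $v$. Among all such separations I would fix one whose small side $A$ is inclusion-minimal --- this minimality is what I will use for property~(3). To turn the minor model into the required topological model I would use that $F$ is subcubic: each $G[B_v]$ can be pruned to the minimal subtree on at most three terminals, namely the endpoints in $B_v$ of the realized edges at $v$ together with the vertex $z_v\in B_v\cap(A\cap B)$ when $\deg_F(v)\le 2$. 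Such a subtree is a path or a subdivided $K_{1,3}$; reading off its unique branch vertex as $f(v)$, the segments as the $B_v$-portions of the realizing paths, and suppressing degree-$2$ vertices produces a topological model $(f,p)$ of $F$ in $G[A]$, and for $\deg_F(v)\le 2$ one can take $f(v)=z_v\in A\cap B$. This gives properties (1) and (2).

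For property~(3) I would argue by minimality. Suppose $A\cap B$ is not linked in $G[B]$; then there are $X,Y\subseteq A\cap B$ of equal size $r$ with no $r$ vertex-disjoint $X$--$Y$ paths in $G[B]$ meeting $A\cap B$ only at their ends, so by Menger's theorem $G[B]$ has a cut of size $<r$ between $X$ and $Y$. Uncrossing this cut with the separation $(A,B)$ yields a separation of $G$ of order $<\abs{V(F)}$; since $(A,B)$ is oriented by the tangle $\mathcal T$, exactly one side of the new separation is ``$\mathcal T$-large'', which forces the rooted copy of $F$ onto the other side. The result is a separation still left-containing $F$ but with a strictly smaller small side (or a boundary too small to host $F$ given the order of $\mathcal T$) --- a contradiction. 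Hence $A\cap B$ is linked in $G[B]$.

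I expect the routine parts to be the branch-width conversion and the subcubic minor-to-topological cleanup, while the main obstacle is property~(3): making the left-containing separation linked on the $B$-side genuinely needs the tangle, not just large treewidth, because it is the tangle that tells us on which side of every newly uncovered small cut the rooted copy of $F$ must survive. A secondary subtlety is that Leaf--Seymour is phrased for rooted minor models, so a little care is needed when relocating the branch vertices of degree-$\le 2$ vertices onto $A\cap B$; a quantitative refinement in the style of Raymond and Thilikos~\cite{RaymondT16} could be plugged in but is not required here.
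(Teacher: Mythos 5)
Your route is essentially the paper's route, but you do more work than the paper, and one of the extra steps has an off-by-one problem. The paper simply cites the lemma of Leaf and Seymour (Lemma~\ref{lem: linkages}) in a form that \emph{already} produces a separation $(A,B)$ of order $|V(F)|$ that both left-contains $F$ and has $A\cap B$ linked in $G[B]$, so the paper's only remaining task is the minor-to-topological cleanup. Your proposal instead re-derives the linkedness of $A\cap B$ from scratch via a tangle and an uncrossing/minimality argument; that is redundant given the cited lemma, and the particular way you argue it is shaky. You obtain from $\tw(G)\ge\frac32 k-1$ only that $\mathrm{bw}(G)\ge k$, hence a tangle $\mathcal T$ of order $k=|V(F)|$. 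But a tangle of order $k$ orients only separations of order \emph{strictly less than} $k$, so your claim that ``$(A,B)$ is oriented by the tangle $\mathcal T$'' fails for a separation of order exactly $|V(F)|$; the argument must instead be run entirely at the level of the strictly smaller-order separation obtained after uncrossing, and one must then re-establish that $F$ is still left-contained, which is precisely the content of Leaf--Seymour's proof. It is cleaner, and matches the paper, to just invoke the statement that already bundles linkedness in. Your minor-to-topological conversion (prune each branch set to a Steiner tree on at most three terminals and read off the spider's branch vertex, or the chosen boundary vertex $z_v$ for $\deg\le 2$) is essentially the same device the paper uses with spanning trees and lowest common ancestors. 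Be aware, though, that your two prescriptions ``take $f(v)$ to be the spider's branch vertex'' and ``for $\deg_F(v)\le 2$ take $f(v)=z_v$'' can conflict: if $z_v$ is a leaf of the spider rather than its centre, the two realizing segments from $z_v$ to the two exit vertices share the $z_v$-to-centre stub and are not internally disjoint. The paper's write-up glosses over the same point, so this is not a defect unique to your sketch, but if you keep the self-contained phrasing you should address it (for instance by re-selecting the crossing edges or re-shaping the branch set so that $z_v$ lies on the exit-to-exit path).
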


We defer the proof of this lemma to \S\ref{sec:proof linkage-topological}.
Building upon Lemma~\ref{lem: linkage-topological}, we prove Lemma~\ref{lem: tw implies K4}
by adapting work of Raymond and Thilikos~\cite{RaymondT16}, who used
a variant of Lemma~\ref{lem: linkage-topological} to prove the existence of $k$-wheel minors in graphs
of treewidth $\Omega(k)$.
To this end, let $T$ and $P$ be obtained by $k$-subdividing the
full binary tree with $8$ leaves, and the path with $8$ vertices, respectively.
We invoke Lemma~\ref{lem: linkage-topological} with $F$ instantiated to the
disjoint union $T\cup P$. Since $F$ has $21 k + 2$ vertices, we obtain from Lemma~\ref{lem:
  linkage-topological}
that any graph $G$ with $\tw(G)\geq 32 k + 2\geq \frac{3}{2}\cdot (21 k +2 )$ has a
separation
$(A,B)$ of order $|V(F)|$ that contains~$F$ in $G[A]$ and has $A\cap B$
linked in~$G[B]$.

Let $X_{F}$ denote the eight leaves of $T$, and let $Y_{F}$
denote the eight non-subdivision vertices of $P$. Furthermore, let
$X_{G},Y_{G}\subseteq A\cap B$ denote the images of $X_{F}$ and
$Y_{F}$ in $G[A]$ under a topological minor model guaranteed by Lemma~\ref{lem: linkage-topological}.
Since $A\cap B$ is linked, we can find eight
disjoint paths connecting $X_{G}$ and $Y_{G}$ in~$G[B]$.
We then prove that, regardless of how these paths connect~$X_{G}$ and~$Y_{G}$,
they always complete the topological minor model of~$F$ to one of~$\subdivtetra k$ in $G$.
Lemma~\ref{lem: tw implies K4} then follows.

\begin{proof}[of Lemma~\ref{lem: tw implies K4}]
Let $k\in\mathbb{N}$ and let $G$ be a graph with $\tw(G)\geq 32 k + 2$.
As before, let~$T$ denote the full binary tree with $8$ leaves, with
root $r$, after each edge was subdivided $k$ times. Let
$P$ denote the path on $8$ vertices after subdividing each edge~$k$ times.

We write $X_{F}=\{x_{1},\ldots,x_{8}\}$ for the leaves of $T$, and
we write $Y_{F}=\{y_{1},\ldots,y_{8}\}$ for the vertices in $P$
that were not obtained as subdivision vertices. Finally, we write
$F$ for the disjoint union $T\cup P$ and consider $X_{F},Y_{F}\subseteq V(F)$.
Note that $|V(F)|= 21 k + 2$ and that the degree of all vertices in
$X_{F}\cup Y_{F}$ is bounded by $2$.

By Lemma~\ref{lem: linkage-topological}, there is a separation
$(A,B)$ in $G$ of order $|V(F)|$ such that $A\cap B$ is linked,
and there is a topological minor model $(f,p)$ of $F$ in $G[A]$
with $f(X_{F}\cup Y_{F})\subseteq A\cap B$. We write $X_{G}=\{f(v)\mid v\in X_{F}\}$
and $Y_{G}=\{f(v)\mid v\in Y_{F}\}$. In the following, we aim at
completing the subgraph induced by $(f,p)$ in $G$ to a $\subdivtetra k$
subgraph.

Since $A\cap B$ is linked in $G[B]$, there are vertex-disjoint paths
$L_{1},\ldots,L_{8}$ between~$X_{G}$ and~$Y_{G}$ in~$G[B]$ that
avoid $A\cap B$ except at their endpoints. For $i\in[8]$, denote
the endpoints of~$L_{i}$ in~$X_{G}$ and~$Y_{G}$ by $s_{i}$ and
$t_{i}$, respectively. Assume without limitation of generality (by
reordering paths) that $t_{i}=f(y_{i})$ holds for all $i\in[8]$.
Furthermore, for $x\in X_{G}$, write $\sigma(x)$ for the vertex
of $Y_{G}$ that $x$ is connected to via its path among $L_{1},\ldots,L_{8}$.

\global\long\def\root#1{\mathit{root}(#1)}
\global\long\def\lca#1#2{\mathit{lca}(#1,#2)}

\begin{figure}
\begin{center}
  \begin{tabular}{rrr}
  \includegraphics[scale=0.6]{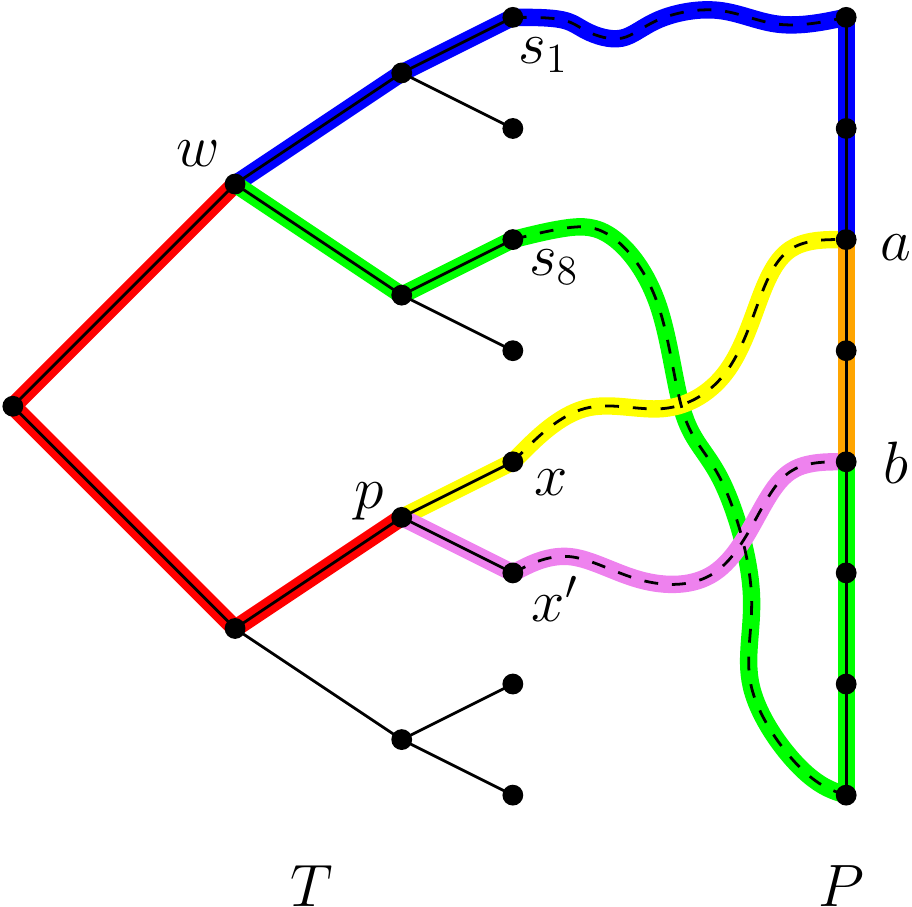}
  &
  ~~~~~
  &
  \includegraphics[scale=0.6]{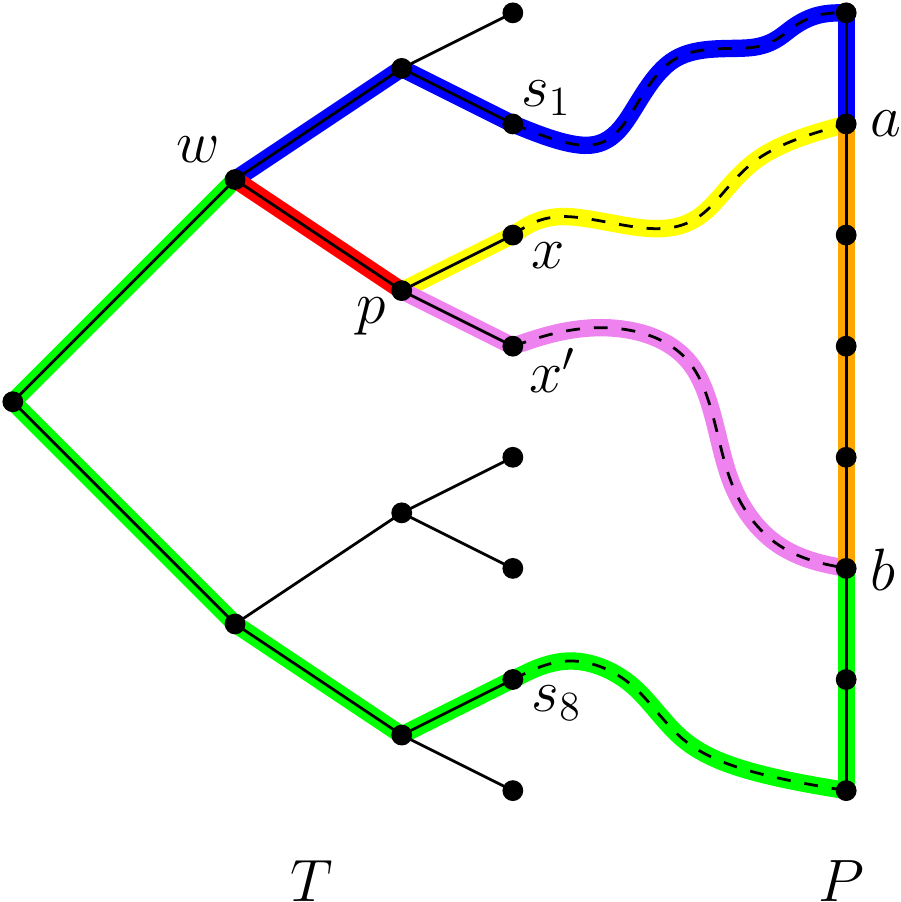}
  \\
  \\
  \includegraphics[scale=0.6]{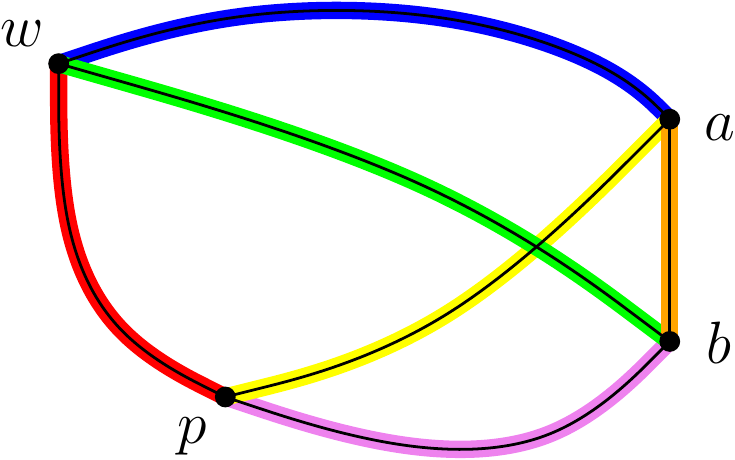}
  & &
  \includegraphics[scale=0.6]{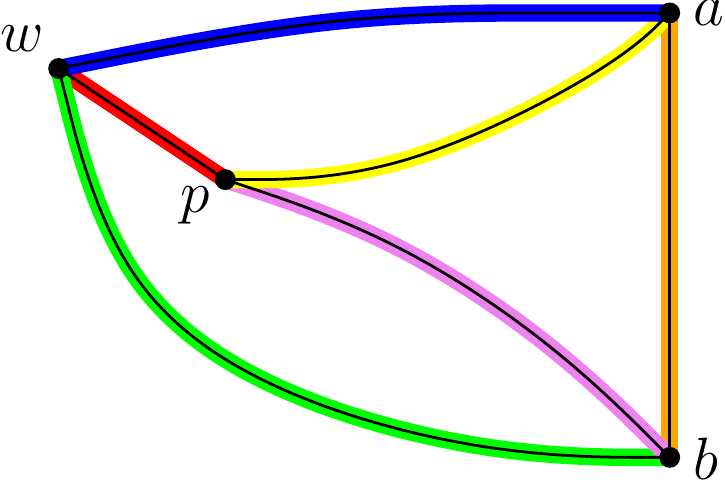}
  \end{tabular}
\end{center}

\caption{\label{fig: minor-cases}%
  The two cases relevant in the proof of Lemma~\ref{lem: tw implies K4}. The top figures depict the cases and the bottom figures show the corresponding $\subdivtetra k$ subgraphs.
  The \emph{bullets} correspond to the main vertices of the tree~$T$ and the
  path~$P$.
  The \emph{lines} represent a path of length~$k$, created by subdividing the
  original tree and path.
  The \emph{dashed curves} correspond to the linkage that is guaranteed to exist
  between the 8 leaves of~$T$ and the 8 main vertices of~$P$.
  The vertices $w$, $p$, $a$, and $b$ are the branch vertices of the
  $\subdivtetra k$-graph that we find.
  The vertex $s_1$ is the linkage-partner of the first path vertex, and $s_8$ is
  the linkage-partner of the last path vertex.
  \emph{Top left:} In Case~1, $s_1$ and $s_8$ have a least common ancestor~$w$ that
  is not the root.
  \emph{Top right:} In Case~2, $s_1$ and $s_8$ have the root as their least common
  ancestor.
  \emph{Bottom:} A schematic view of the corresponding $\subdivtetra k$ subgraphs, where each of the six subdivided edges (that is, paths of length greater than $k$) is shown in a different color. The same color is used to highlight the path in the graph above.
}
\end{figure}

Let $S$ denote the image of $T$ under $(f,p)$, which
is a tree; let $\root S=f(r)$. Write~$S_{1}, S_{2}$
for the two subtrees of~$S$ rooted at the children of~$\root S$.
Let $\lca{s_{1}}{s_{8}}$ denote the lowest common ancestor
of $s_{1}$ and $s_{8}$ in $S$. We distinguish two cases (see Figure~\ref{fig: minor-cases}).
\begin{description}
\item [{Case~1:}] We have $\lca{s_{1}}{s_{8}}\neq\root S$. That is, $s_{1}$
and $s_{8}$ are both in $S_{1}$ or both in $S_{2}$. Assume without
limitation of generality that $s_{1},s_{8}\in V(S_{1})$, as the argument
proceeds symmetrically otherwise. Let $x$ and $x'$ be two distinct leaves of
$S_{2}$. Then we find a $\subdivtetra k$ in $G$ by defining
branch vertices $w=\lca{s_{1}}{s_{8}}$, $p=\lca x{x'}$, $a=\sigma(x)$,
and $b=\sigma(x')$.
Note that $p\not\in\set{x,x'}$ and that the four vertices are distinct.

We realize the edge $pw$ along the $(p,w)$-path present in $S$,
and $ab$ along the $(a,b)$-path present in $P$.
We realize $pa$ by concatenating the $(p,x)$-path in $S$ and the $(x,a)$-path
in $G[B]$, and we realize $pb$ likewise.
To realize $wa$, we proceed as follows: If $a$ precedes $b$ in the order on $P$,
then concatenate the $(w,s_{1})$-path in $S$ with $L_{1}$ and the
$(y_{1},a)$-path in $P$. If $b$ precedes $a$, then concatenate
the $(w,s_{8})$-path in $S$ with $L_{8}$ and the $(y_{8},a)$-path
in $P$. Realize $wb$ symmetrically.
Then every edge between pairs in $\{w,p,a,b\}$ is realized,
and it is so by a path of length at least $k$. This gives a topological
minor model of $\subdivtetra k$ in $G$.

\item [{Case~2:}] We have $\lca{s_{1}}{s_{8}}=\root S$. That is, $s_{1}$
and $s_{8}$ are in different subtrees $S_{1}$ and $S_{2}$.
Let~$R$ be a subtree of height $2$ in $S$ that is disjoint from the
$(s_{1},s_{8})$-path in $S$. It is easy to verify that such a subtree
indeed exists; denote its root by $p$, its leaves by $x,x'$, and
its parent in $S$ by $w$. Furthermore, define $a=\sigma(x)$ and
$b=\sigma(x')$. We declare $\{w,p,a,b\}$ as branch vertices and
connect them as in the previous case. 
\end{description}
In both cases, the constructed topological minor model shows that~$G$ contains a
$\subdivtetra k$ subgraph. This proves the lemma.
\end{proof}

\subsection{Proof of Lemma~\ref{lem: linkage-topological}}\label{sec:proof linkage-topological}

For the following part, we need to define the notion of a \emph{minor model}.
Note that only \emph{topological} minor models were defined in the main text.
\begin{definition}\label{def: minor-models-apdx}
Let $H$ and $G$ be undirected.
A \emph{minor model} of $H$ in $G$ is a function $f:V(H)\to2^{V(G)}$ such that
\begin{enumerate}
\item $G[f(v)]$ is connected for all $v\in V(H)$, and
\item $f(u)\cap f(v)=\emptyset$ for all $u,v\in V(H)$ with $u\neq v$,
and
\item for all $uv\in E(H)$, there is an edge in $G$ from a vertex in $f(u)$
to a vertex in $f(v)$.
\end{enumerate}
\end{definition}

Furthermore, we require the notion of left-containment:

\begin{definition}
If $G$ is a graph with separation $(A,B)$, we say that $(A,B)$ \emph{left-contains}
$H$ if $G[A]$ contains a minor model $f$ of $H$ with $|f(v)\cap(A\cap B)|=1$
\end{definition}

We can now state a lemma by Leaf and Seymour~\cite{LeafS15} that can be easily adapted
to obtain Lemma~\ref{lem: linkage-topological}.

\begin{lemma}[\cite{LeafS15}]
\label{lem: linkages}Let $F$ be a forest on $k>0$ vertices and
let $G$ be a graph.
If $\tw(G)\geq\frac{3}{2}k-1$, then there exists
a separation $(A,B)$ of order~$|V(F)|$ in $G$ such that $(A,B)$
left-contains~$F$, and $A\cap B$ is linked in $G[B]$.
\end{lemma}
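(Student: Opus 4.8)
The final statement is the linkage lemma of Leaf and Seymour~\cite{LeafS15}, and \textbf{the plan} is to reconstruct their argument. The first observation is that left-containing $F$ already forces $|A\cap B|\geq|V(F)|=k$, since the $k$ branch-set representatives are pairwise distinct and lie in $A\cap B$; asking in addition that the order be exactly $k$ therefore just says that \emph{every} vertex of $A\cap B$ is a branch-set representative of the minor model sitting in $G[A]$. So the lemma splits into two tasks: \emph{(a)} produce \emph{some} separation of order exactly $k$ whose $A$-side left-contains $F$, and \emph{(b)} upgrade it so that $A\cap B$ is linked in $G[B]$. Task (b) will fall out of an extremal choice; task (a) is where the hypothesis $\tw(G)\geq\frac32 k-1$ is genuinely used.

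For task (b) I would run the standard uncrossing argument. Among all separations $(A,B)$ of $G$ that left-contain $F$, together with a witnessing minor model, pick one for which the pair $(|A\cap B|,|A|)$ is lexicographically minimum; task (a) will have shown the minimum order is exactly $k$. If $A\cap B$ were not linked in $G[B]$, there would be sets $X,Y\subseteq A\cap B$ with $|X|=|Y|$ for which $G[B]$ has no $|X|$ vertex-disjoint $X$--$Y$ paths meeting $A\cap B$ only at endpoints, hence by Menger's theorem a separator $Z$ of $G[B]$ with $|Z|<|X|\leq|A\cap B|$ splitting $X$ from $Y$. Cutting $G$ along $Z$ and rerouting the at most $|A\cap B|$ branch-set representatives onto the ``slots'' of $Z$ --- which is possible because $F$ is a forest, so the model carries no cycles that could obstruct the rerouting --- yields a separation of strictly smaller order that still left-contains $F$, contradicting minimality. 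Only the bookkeeping in the rerouting step needs care; conceptually this step is routine.

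For task (a) I would induct on $k=|V(F)|$, strengthening the inductive statement to also retain a treewidth lower bound on $G[B]$ (so that at each step there is room on the $B$-side). The base case $k=1$ is immediate: take $B=\{v\}$ for an arbitrary vertex $v$ and $A=V(G)$; this is a valid separation in the Leaf--Seymour sense, which permits one side to lie inside the other, and $\{v\}$ is trivially linked in $G[\{v\}]$. For the inductive step, delete a vertex $v$ of degree $\leq 1$ from $F$ --- one exists since $F$ is a forest --- and apply the inductive hypothesis to $F-v$, using $\tw(G)\geq\frac32 k-1\geq\frac32(k-1)-1$, to obtain a separation $(A',B')$ of order $k-1$ that left-contains $F-v$ via a model $f'$, with $A'\cap B'$ linked in $G[B']$ and $\tw(G[B'])$ still large. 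Now ``push'' the separation one vertex deeper into $B'$: pick a suitable vertex $b\in B'\setminus A'$ --- joined, when $v$ has a neighbour $u$ in $F$, by an edge or a short $G[B']$-path to the representative $r_u\in A'\cap B'$ of $u$ --- move $b$ (and that path) to the $A$-side, declare it the branch set of $v$, and enlarge the interface by $b$ to size $k$. The treewidth needed for this push is exactly the slack $\tw(G)-(k-1)\geq\frac k2$, which is why $\frac32 k$ rather than $k$ appears.

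\textbf{The hard part} is precisely this push: one must choose $b$ and the rerouting inside $G[B']$ so that the \emph{new}, order-$k$ interface is again linked in the new $B$-side, and this is where Leaf and Seymour spend the full slack $\tw(G)\geq\frac32 k-1$ --- considerably more than the $\tw(G)\geq k-1$ that would roughly suffice merely to find $F$ as an unconstrained minor. Everything else is routine Menger-style surgery. (Once this lemma is available, Lemma~\ref{lem: linkage-topological} follows by thinning each branch-set tree of the minor model into a topological model: this is legitimate because $F$ has maximum degree $3$, so each branch set prunes to a subdivided ``tripod'' or path, and the representative of any vertex of degree at most $2$ slides to an end of its path, hence into $A\cap B$.)
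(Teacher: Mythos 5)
The paper does not prove this lemma: it is cited verbatim from Leaf and Seymour~\cite{LeafS15}, and the paper's own contribution is only Lemma~\ref{lem: linkage-topological}, which thins the minor model delivered by~\cite{LeafS15} into a topological one. Your proposal is therefore a reconstruction of the Leaf--Seymour argument, and it has a genuine gap that you yourself flag: the ``push'' in the inductive step of task~(a) --- choosing~$b$, re-routing inside~$G[B']$, and enlarging the interface to order~$k$ --- is never actually specified, yet it carries the entire content of the lemma. You give no mechanism for selecting~$b$, no argument that left-containment survives the step (the new branch set for~$v$ must meet the new~$A\cap B$ in exactly one vertex and must attach correctly to the image of~$u$ without disturbing the other branch sets), and no account of why~$\frac{3}{2}$ is the right constant beyond observing that~$\tw(G)-(k-1)\ge\frac{k}{2}$, which is a numerical identity and not a proof. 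Declaring a step hard and then citing~\cite{LeafS15} for it is not a reconstruction of their argument.

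There is also an internal inconsistency in the plan. You say task~(b) --- linkedness of~$A\cap B$ in~$G[B]$ --- falls out of a treewidth-free extremal choice (lexicographically minimize $(|A\cap B|,|A|)$ over separations that left-contain~$F$, then uncross via Menger). But in task~(a)'s inductive step you then require the push to again produce a linked interface, saying this is ``where Leaf and Seymour spend the full slack.'' These two accounts contradict each other: if~(b) works as stated, the induction in~(a) need not maintain linkedness at all; if~(a) must maintain it, the extremal argument in~(b) is redundant and the real difficulty has not been isolated. A smaller issue: under the paper's stated definition of separation, which requires both~$A\setminus B$ and~$B\setminus A$ to be non-empty, your base case~$(A,B)=(V(G),\{v\})$ is not a separation, and a~$2$-connected graph has no order-$1$ separation at all --- this is a convention mismatch with~\cite{LeafS15} (who allow $B\setminus A=\emptyset$) rather than a fatal flaw, but it needs to be reconciled before the base case makes sense.
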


Finally, we prove Lemma~\ref{lem: linkage-topological}.

\begin{proof}[of Lemma~\ref{lem: linkage-topological}]
  Let $(A,B)$ be the separation from Lemma~\ref{lem: linkages}. Then, the third
  condition holds so we just need to prove the first two conditions. By the same
  lemma, $G[A]$ contains a minor model $f'$ of $F$ with $|f'(v)\cap(A\cap B)|=1$
  for all $v\in V(F)$. It will be convenient to fix a spanning tree inside each
  $G[f'(v)]$ for every $v\in V(F)$; let us denote it by $T(v)$. Additionally,
  for every $v_1v_2\in E(F)$, we will fix one edge $u_1u_2\in E(G)$ such that
  $u_i\in f'(v_i)$. We define the topological minor model $(f,p)$ as follows.
  For every $v\in V(F)$ of degree $\leq2$, let $f(v)=u$ where $u\in
  f'(v)\cap(A\cap B)$, the only such vertex. This satisfies the second condition
  of the corollary.

  For $v\in V(F)$ of degree $3$, let $v_1,v_2,v_3$ be its adjacent vertices in
  $F$. Then, for each edge $vv_i$ let $u_iu'_i$ be the corresponding fixed edge
  where $u_i\in f'(v)$ (and $u'_i \in f'(v_i)$). If $u_1,u_2,u_3$ are not
  distinct, then choose one of them to be $f(v)$. If they are distinct, then
  take the spanning tree $T(v)$, root it at $u_3$ and let $f(v)$ be the vertex
  that is the lowest common ancestor of $u_1$ and $u_2$.

  We defined $f$, now we define the paths $p$. For edge $v_1v_2\in E(F)$, let
  $p(v_1v_2)$ be the path defined by following the spanning tree $T(v_1)$ from $f(v_1)$
  to the edge $u_1u_2\in E(G)$ that we fixed for $v_1v_2\in E(F)$, and then
  following the spanning tree $T(v_2)$ to $f(v_2)$. It remains to show that the
  paths are vertex-disjoint, except for their endpoints. Every path uses exactly
  one edge not in $G[f'(v)]$ for some $v\in V(F)$; these edges are distinct as
  they correspond to different edges of $F$. Therefore, path intersections could
  only happen inside one of the $G[f'(v)]$ graphs. However, the selected vertex
  $f(v)$ is connected to the vertices $u_1,u_2,u_3$ defined in the previous
  paragraph via disjoint paths in the spanning tree, proving that the paths $p$
  are vertex-disjoint within $G[f'(v)]$ apart from the endpoints. This proves
  that $(f,p)$ is a topological minor model and concludes the proof of the
  corollary.
\end{proof}

\section{Dynamic programming algorithm for exact detour}\label{sec:DP}
We devise an algorithm for \probEXDet using a reduction to \probEXKPath, the
problem that is given $(G,s,t,k)$ to determine whether there is an~$(s,t)$-path
of length exactly~$k$.
\begin{theorem}\label{thm: exact detour algorithm}%
  \probEXDet is fixed-parameter tractable.
  In particular, it has a bounded-error randomized algorithm with running
  time~$2.746^{k}\poly(n)$, and a deterministic algorithm with running time
  $6.745^{k} \poly(n)$.
\end{theorem}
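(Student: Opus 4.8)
The plan is to give a polynomial-time Turing reduction from \probEXDet to \probEXKPath that issues only queries of parameter at most $2k+1$, and then to pipeline it with the fastest single-exponential algorithms for \probEXKPath. Since a colourful $(u,v)$-path on $\ell+1$ colours is exactly a $(u,v)$-path with $\ell$ edges, \probEXKPath is solvable within the same time bounds as \probKPath by the known color-coding, algebraic-sieving, and representative-set methods; so a query of parameter $\ell$ costs $c^{\ell}\cdot n^{\Oh(1)}$ with $c=1.657$ (randomized, Björklund et al.) or $c=2.597$ (deterministic, Zehavi). With $\ell\le 2k+1$ this is $c\cdot(c^{2})^{k}\cdot n^{\Oh(1)}$, and $1.657^{2}\le 2.746$ and $2.597^{2}\le 6.745$ give the claimed running times. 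The reduction is indifferent to edge orientations, so it also yields FPT algorithms on directed graphs, with correspondingly larger but still single-exponential bases coming from directed \probEXKPath.

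First I would preprocess: restrict to the $(s,t)$-relevant part, delete every vertex at distance more than $d+k$ from $t$ (such a vertex cannot lie on an $(s,t)$-path of length $\le d+k$), set $d=\dist_G(s,t)$, and run BFS towards $t$ (i.e.\ in the reverse graph if $G$ is directed), obtaining levels $L_i=\{x:\dist_G(x,t)=i\}$ with $s\in L_d$ and $L_0=\{t\}$. The structural engine is monotonicity: for every edge $x\to y$ of $G$ we have $\dist_G(x,t)\le\dist_G(y,t)+1$, so along any $(s,t)$-path the level index drops by at most one per step. Hence, on a path $P=v_0\cdots v_{d+k}$ of length $d+k$, every level in $\{0,\dots,d\}$ is visited, and a one-line excess count ($d+k+1$ vertices on $P$ against $\ge d+1$ distinct levels) yields that at most $k$ levels contain two or more vertices of $P$ and at most $k$ vertices of $P$ lie above level $d$. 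Equivalently, $P$ has at least $d$ \emph{down-edges} (those decreasing the level by exactly one) and at most $k$ other edges, so the down-edges form at most $k+1$ maximal descending runs, and everything not inside such a run is confined to $\Oh(k)$ short ``irregular'' bands of consecutive levels, each traversed by a sub-path of length $\Oh(k)$.

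I would then run a dynamic program sweeping the levels $j=d,d-1,\dots,0$, with states $(j,x,b)$ meaning ``an $(s,x)$-subpath of the required shape has been realized, where $x\in L_j$ and its length is $(d-j)+b$ with $b\le k$''. Inside a descending run the transition is cheap: move from $x\in L_j$ to a neighbour in $L_{j-1}$, keeping $b$ fixed. When $P$ does something irregular around level $j$ — a plural level, a peak, or a bump into levels above $d$ — the DP instead guesses the level $j'<j$ and vertex $y\in L_{j'}$ at which $P$ re-enters the descending regime and queries \probEXKPath on the subgraph induced by the relevant $\Oh(k)$-level band, asking for an $(x,y)$-path of exactly $(j-j')+(\text{budget consumed in this step})$ edges; the accounting of the previous paragraph bounds this quantity by $2k+1$. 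Then $(G,s,t,k)$ is a \texttt{YES}-instance iff the state $(0,t,k)$ is reachable, and the self-reducibility argument announced for both problems upgrades detection to construction with polynomial overhead.

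I expect the main obstacle to be reconciling the global vertex-disjointness a path must satisfy with the local way this algorithm assembles it: an \probEXKPath query returns \emph{some} exact-length path, not necessarily one compatible with the descending runs and irregular pieces the DP has already committed or will commit later, and the compact DP state cannot remember every committed vertex. This is where the level structure must be used carefully. A singular level feeds its unique $P$-vertex to exactly one descending run, so distinct descending runs meet only inside the $\le k$ plural levels, and each irregular band, being short, overlaps everything committed outside it in only $\Oh(k)$ levels; hence the $\Oh(k)$ vertices that a given query must avoid can be enumerated — folding, where necessary, the adjacent tail of a descent into the same query so that its vertices are available — and supplied to the oracle. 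Verifying that the DP accepts exactly when a globally consistent assembly exists, and that no query ever exceeds parameter $2k+1$, is then careful but routine bookkeeping.
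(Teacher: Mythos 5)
Your plan is essentially the paper's: BFS-layer the graph, observe that a path of length $d+k$ has at most $k$ plural layers, and turn this into a polynomial-time Turing reduction to \probEXKPath issuing queries of parameter $\le 2k+1$, then pipeline with the $1.657^k$ and $2.597^k$ path algorithms to get $2.746^k$ and $6.745^k$. However, the step you correctly flag as ``the main obstacle'' --- making the locally found path pieces form one simple path --- is exactly where your argument has a real gap, and you do not close it. Your state $(j,x,b)$ carries only $\Oh(\log n)$ bits and therefore cannot remember ``the $\Oh(k)$ vertices that a given query must avoid''; supplying them to the oracle presupposes that the DP knows them, and a descending run in your picture can contribute $\Omega(n)$ committed vertices that the state has already forgotten. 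The clean resolution, which your write-up circles but never lands on, is that no forbidden vertices need to be tracked at all: by pigeonhole, within any window of $k+1$ consecutive layers there is a layer containing exactly one vertex $y$ of the path, and because BFS levels change by at most one per edge, a simple $(x,t)$-path that is singleton at level $d(y)$ splits at $y$ into a prefix living entirely in levels $[d(x),d(y)]$ and a suffix living entirely in levels $[d(y),d(t)]$. Defining $G_{[x,y]}$ to contain $x$, $y$, and only the vertices strictly between those two levels (and $G_{[y,\infty)}$ analogously), the two query graphs intersect in $\{y\}$ alone, so concatenating any $(x,y)$-path found in $G_{[x,y]}$ with any $(y,t)$-path found in $G_{[y,\infty)}$ is automatically self-avoiding. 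That is the paper's DP (its tables $T[\cdot]$ are indexed by vertex and record the set of achievable residual lengths, i.e.\ your budget~$b$), and it makes the ``descending run vs.\ irregular band'' distinction and the forbidden-vertex bookkeeping unnecessary.

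A secondary, more minor gap: justifying that the $1.657^k$ randomized and $2.597^k$ deterministic bounds carry over to \probEXKPath with two fixed terminals by saying ``a colourful $(u,v)$-path on $\ell+1$ colours is a $(u,v)$-path with $\ell$ edges'' only covers color coding, whose base is far worse. The paper gives algorithm-specific arguments: for Bj\"orklund et al.\ it attaches two pendant edges at $s$ and $t$ and checks that the characteristic-two cancellation of non-path walks still fixes the endpoints; for Zehavi it reduces to the weighted variant via weight $2$ on ordinary edges and weight $1$ on the two pendants, with budget $2k+2$. Without some such argument the stated constants are not justified.
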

\begin{proof}
  Let $(G,s,t,k)$ be an instance of \probEXDet.
  We use Lemma~\ref{lem: DP reduction} and run the deterministic polynomial-time
  reduction in algorithm~A on this instance, which makes queries to
  $\probEXKPath$ whose parameter $k'$ is at most $2k+1$.
  To answer these queries, we use the best known algorithm as a subroutine.
  Using the deterministic algorithm by Zehavi~\cite{Zehavi14}, we obtain a
  running time of $2.597^{k'} \cdot \poly(n) \le 6.745^k\cdot\poly(n)$ for
  \probEXDet.
  Using the randomized algorithm by Björklund et al.~\cite{Bjorklund2017119}, we obtain
  a running time of $1.657^{k'}\cdot\poly(n)\le 2.746^{k}\cdot\poly(n)$.
\end{proof}

\begin{figure}
\begin{center}
  \includegraphics[scale=.6]{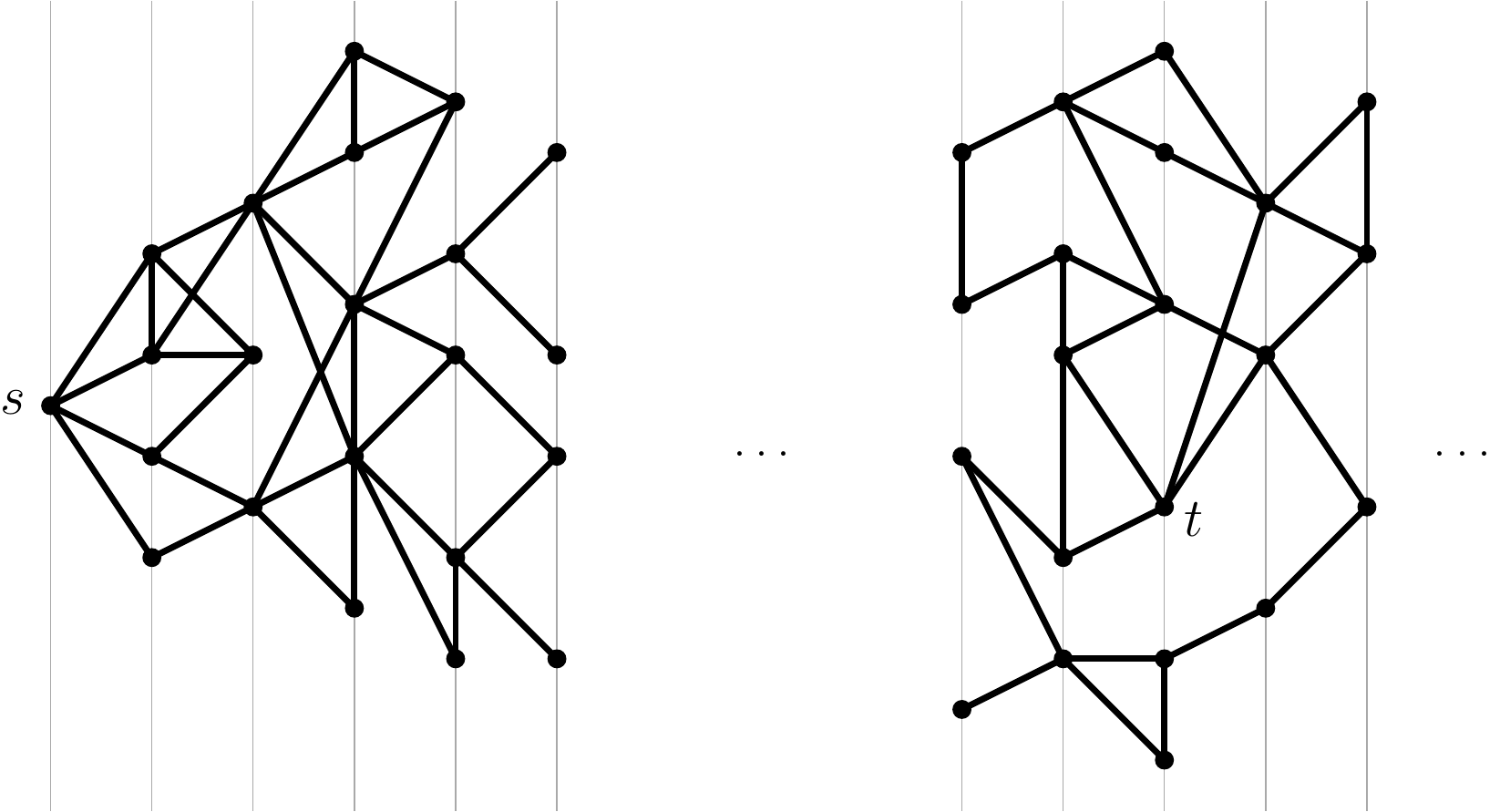}\\
\end{center}
\caption{\label{fig: BFS_layers}%
  The \emph{fine vertical lines} in this drawing of an example graph represent
  distance layers, that is, vertices whose distance~$d(v)$ from $s$ is equal.
  }
\end{figure}
\begin{figure}
\begin{center}
  \includegraphics[scale=.7]{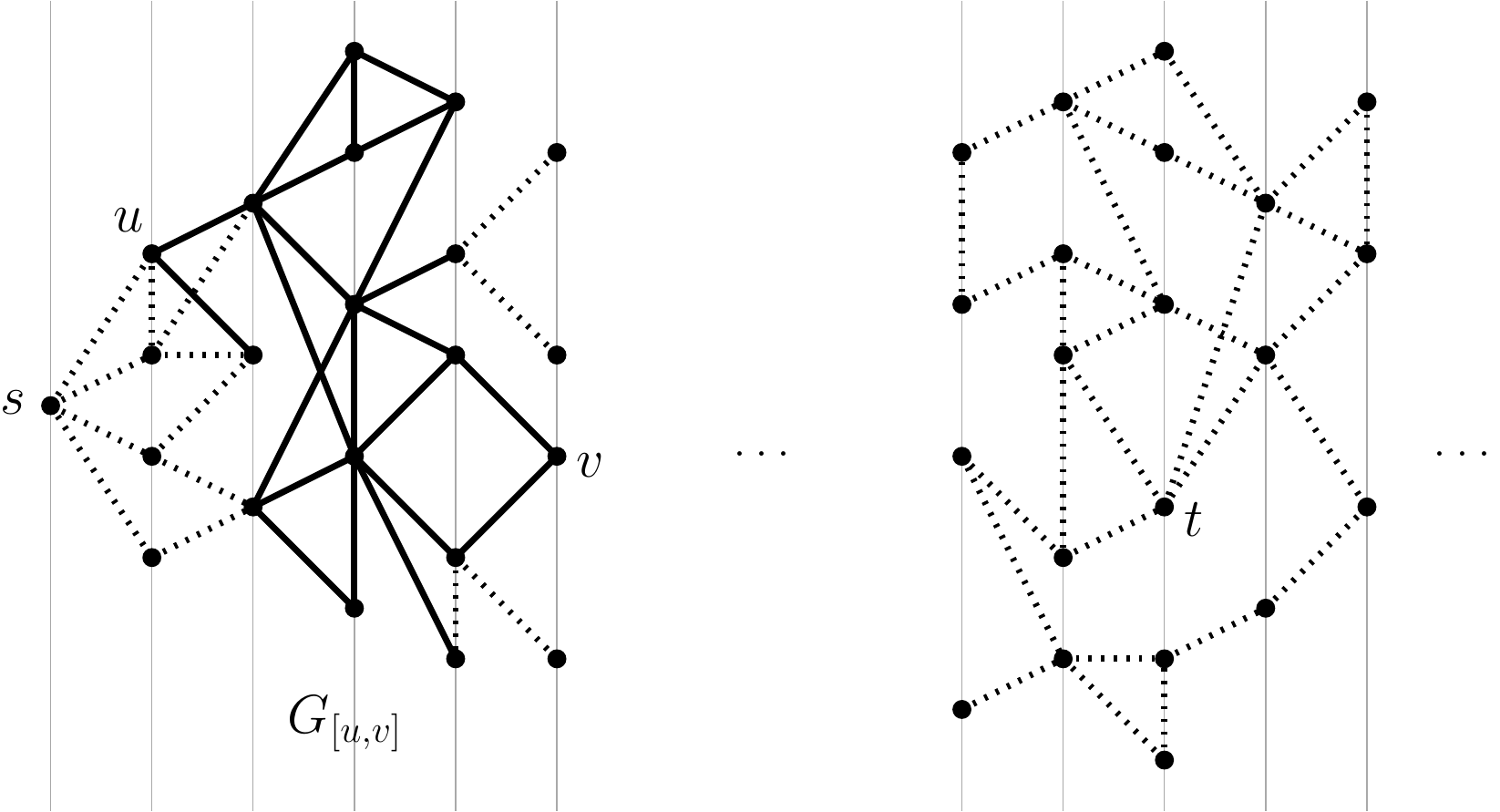}
\end{center}
\caption{\label{fig: graph Guv}%
  The \emph{solid} edges are the edges of an example graph $G_{[u,v]}$; the
  other edges of~$G$ are \emph{dashed}.
  }
\end{figure}

Before we state the algorithm, let us introduce some notation.
Let $s,t\in V(G)$.
For any $x\in V(G)$, we abbreviate $\dist_G(s,x)$, that is, the distance from
$s$ to $x$ in $G$, with~$d(x)$, and we let the $i$-th layer of~$G$ be the set of
vertices~$x$ with~$d(x)=i$ (see Figure~\ref{fig: BFS_layers}).
For $u,v\in V(G)$ with $d(u)<d(v)$, we write~$G_{[u,v]}$ for the graph~$G[X]$
induced by the vertex set $X$ that contains $u$, $v$, and all vertices $x$ with
$d(u)<d(x)<d(v)$ (see Figure~\ref{fig: graph Guv}).
We also write $G_{[u,\infty)}$ for the graph $G[X]$ induced by the vertex set
$X$ that contains $u$ and all vertices~$x$ with $d(u)<d(x)$.
These graphs can be computed in linear time using breadth-first search
starting at $s$.
We now describe an algorithm for \probEXDet that makes queries to an oracle
for \probEXKPath.

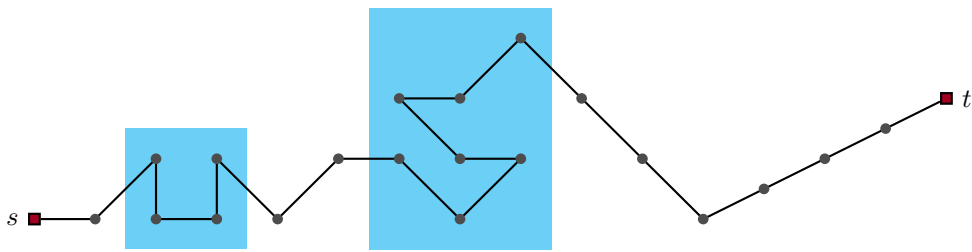
\begin{figure}[tp]
\begin{center}
  \begin{tikzpicture}[scale=.8]
    \node[uv,label=left:{$s$}] (s) at (0,0) {};
    \node[uv,label=right:{$t$}] (t) at (15,2) {};

    \path[fill=cyan!50] (1.5,1.5) rectangle (3.5,-.5);
    \path[fill=cyan!50] (5.5,3.5) rectangle (8.5,-.5);

    \draw[thick] (s)
    -- (1,0)  node[vtx] {}
    -- (2,1)  node[vtx] {}
    -- (2,0)  node[vtx] {}
    -- (3,0)  node[vtx] {}
    -- (3,1)  node[vtx] {}
    -- (4,0)  node[vtx] {}
    -- (5,1)  node[vtx] {}
    -- (6,1)  node[vtx] {}
    -- (7,0)  node[vtx] {}
    -- (8,1)  node[vtx] {}
    -- (7,1)  node[vtx] {}
    -- (6,2)  node[vtx] {}
    -- (7,2)  node[vtx] {}
    -- (8,3)  node[vtx] {}
    -- (9,2)  node[vtx] {}
    -- (10,1)  node[vtx] {}
    -- (11,0) node[vtx] {}
    -- (12,.5) node[vtx] {}
    -- (13,1) node[vtx] {}
    -- (14,1.5) node[vtx] {}
    -- (t);
  \end{tikzpicture}
\end{center}
\caption{\label{fig: detour setU}%
  This is an example of a long $(s,t)$-path in a graph with~$d_G(s,t)=15$; the
  distance from~$s$ increases from left to right as in Figure~\ref{fig:
    BFS_layers}.
  The path has length~$21$, so it has~$k=6$ more edges than a shortest path.
  Each of the five marked layers (\emph{cyan shading}) contains more than one
  vertex of the path, and any path of length~$21$ can have at most~$6$ such
  layers.
  }
\end{figure}

The general idea is as follows.
Let $G$ be an undirected graph, and consider an $(s,t)$-path~$P$ of length~$d+k$
where $d=d_G(s,t)$, and let~$x$ be a token that travels along this path from~$s$
to~$t$.
As the token advances one step in the path, the number~$d(x)$ can be
incremented, decremented, or stay the same.
When~$x$ moves from~$s$ to~$t$, we must increment~$d(x)$ at least~$d$ times, can
decrement it at most $k/2$ times, and keep it unchanged at most~$k$ times; the
reason is that the path must reach $t$ but must use exactly~$k$ edges more than
a shortest path.
The crucial observation is that there are at most~$k$ different layers whose
intersection with the path~$P$ contains more than one vertex (see Figure~\ref{fig: detour setU}).
The idea for the algorithm is to guess the layers with more than one vertex and
run an algorithm for \probEXKPath on them.

\begin{algor}{A}{\probEXDet}{Given $(G,s,t,k)$, this algorithm decides whether
    the graph~$G$ contains an $(s,t)$-path of length exactly $\dist_G(s,t)+k$.}
\item[A1] (Initialize table) For each $x\in V(G)$ with $d(x)\le d(t)$,
  set $T[x] = \emptyset$.

  \textit{When the algorithm halts, every entry~$T[x]$ of the table is meant to
    satisfy the following property $Q_x$:
    For each integer $\ell$ with $d(t)-d(x)\le\ell\le d(t)-d(x)+k$, the set
    $T[x]$ contains~$\ell$ if and only if $G_{[x,\infty)}$ contains an
    $(x,t)$-path of length~$\ell$.}

\item[A2] (Compute entries for the last $k+1$ layers)
  For each $x\in V(G)$ with $d(t)-k\le d(x)\le d(t)$, let $T[x]$ be the set of all
  integers $\ell$ with $\ell\in\set{0,\dots,2k}$ such that there is an
  $(x,t)$-path of
  length~$\ell$ in~$G_{[x,\infty)}$ (that is, call \probEXKPath$(G_{[x,\infty)},x,t,\ell)$).

  \textit{When this step finishes, all vertices~$x$ in the last $k+1$ layers
    satisfy property $Q_x$.}

\item[A3] (Inductively fill in earlier layers)
  For each $d$ from $d(t)-k-1$ down to $0$, for each $x$ with $d(x)=d$, and for
  each $y$ with $d(x)<d(y)\le d(x)+k+1$, we do the following:
  \begin{description}
    \item[A3a]
      Compute the set $L$ of all $\ell'\in\set{0,\dots,2k+1}$ such that there
      is an $(x,y)$-path of length~$\ell'$ in~$G_{[x,y]}$  (that is, call
      \probEXKPath$(G_{[x,y]},x,y,\ell')$).
    \item[A3b]
      Set $T[x] := T[x] \cup (L+T[y])$.
  \end{description}

  \textit{We will show that, when all vertices of a layer~$d$ have been
    considered, all vertices~$x$ in the layers~$d$ and higher satisfy
    property~$Q_x$.}

\item[A4] Accept if and only if $(\dist_G(s,t)+k)\in T[s]$ holds.
\end{algor}
\begin{lemma}\label{lem: DP reduction}
  Algorithm A is a polynomial-time Turing reduction from \probEXDet to
  \probEXKPath; on instances with parameter~$k$, all queries have
  parameter at most $2k+1$.
\end{lemma}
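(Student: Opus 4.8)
The plan is to establish the loop invariant $Q_x$ announced in the description of Algorithm~A: upon termination, for every $x\in V(G)$ with $d(x)\le d(t)$ and every integer $\ell$ with $d(t)-d(x)\le\ell\le d(t)-d(x)+k$, we have $\ell\in T[x]$ if and only if $G_{[x,\infty)}$ contains an $(x,t)$-path of length $\ell$. Granting this, correctness of Step~A4 is immediate: since $d(s)=0$ we have $G_{[s,\infty)}=G$, and $\ell=\dist_G(s,t)+k=d(t)+k$ lies in the window for $x=s$, so $d(t)+k\in T[s]$ exactly when $G$ has an $(s,t)$-path of length $d(t)+k$. The ``only if'' direction of $Q_x$ requires no window hypothesis: one first observes that whenever $d(x)<d(y)$, the induced subgraphs $G_{[x,y]}$ and $G_{[y,\infty)}$ share exactly the vertex $y$ (their other vertices lie on strictly smaller, resp.\ strictly larger, BFS levels), so concatenating any $(x,y)$-path in $G_{[x,y]}$ with any $(y,t)$-path in $G_{[y,\infty)}$ yields a \emph{simple} $(x,t)$-path in $G_{[x,\infty)}$ of the summed length. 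As Step~A2 inserts only genuine path lengths and Step~A3b inserts only sums of this form, a routine induction on $d(x)$ shows that every element of every $T[x]$ certifies an $(x,t)$-path of that length in $G_{[x,\infty)}$.

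The ``if'' direction, together with the bound on query parameters, is the heart of the argument and I would prove it by downward induction on $d(x)$, i.e.\ starting from the base region $d(x)\ge d(t)-k$. Fix an $(x,t)$-path $P$ of length $\ell$ in $G_{[x,\infty)}$ with $\ell$ in the window, and call an edge of $P$ \emph{up}, \emph{down}, or \emph{flat} according to whether the BFS level $d(\cdot)$ increases, decreases, or is unchanged along it. If $a,b,c$ count these, then $a-b=d(t)-d(x)$ and $a+b+c=\ell$, so $2b+c=\ell-(d(t)-d(x))\le k$. Since adjacent vertices differ in level by at most one, $P$ meets every level between $d(x)$ and $d(t)$, and comparing the number of vertices of $P$ with the number of levels it meets shows that at most $k$ levels contain more than one vertex of $P$. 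In the base case $d(x)\ge d(t)-k$, the window lies inside $\{0,\dots,2k\}$, and Step~A2 literally queries \probEXKPath in $G_{[x,\infty)}$ for each such length, settling the claim with all queries of parameter at most $2k$. In the inductive case $d(x)\le d(t)-k-1$, the $k+1$ levels $d(x)+1,\dots,d(x)+k+1$ all lie in $[d(x)+1,d(t)]$ and hence are met by $P$, so at least one of them, say level $d(y)$, is met by $P$ in a single vertex $y$; I split $P$ at $y$ into an $(x,y)$-prefix $P_1$ and a $(y,t)$-suffix $P_2$.

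The main obstacle is the structural check that this split is well-behaved and produces pieces small enough to be seen by the algorithm. Because $y$ is the only vertex of $P$ on its level, $P_1$ can never rise above level $d(y)$ (otherwise $P$ would cross level $d(y)$ at a vertex before reaching $y$) and, symmetrically, $P_2$ can never drop below it; hence $P_1$ lies in $G_{[x,y]}$ and $P_2$ lies in $G_{[y,\infty)}$. For $P_2$: since $\ell(P_1)\ge d(y)-d(x)$, its length $\ell_2=\ell-\ell(P_1)$ satisfies $d(t)-d(y)\le\ell_2\le d(t)-d(y)+k$, i.e.\ it lies in the window for $y$; as $d(y)>d(x)$, the entry $T[y]$ is finalised before $x$ is processed, so by induction $\ell_2\in T[y]$. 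For $P_1$: its $b_1$ down edges and $c_1$ flat edges form subsets of those of $P$, so $2b_1+c_1\le 2b+c\le k$, and since the net level change of $P_1$ is $d(y)-d(x)$ its length is $\ell_1=2b_1+c_1+(d(y)-d(x))\le k+(k+1)=2k+1$; thus Step~A3a, which enumerates all $(x,y)$-path lengths in $G_{[x,y]}$ up to $2k+1$, puts $\ell_1$ into $L$, and Step~A3b then adds $\ell_1+\ell_2=\ell$ to $T[x]$. This closes the induction, and in particular every query issued in Steps~A2--A3 goes to \probEXKPath with a parameter in $\{0,\dots,2k+1\}$. Finally, the reduction runs in polynomial time: the initial BFS and each subgraph $G_{[x,\infty)}$ or $G_{[x,y]}$ is computed in linear time, the tables have polynomial size, and the double loops of A2 and A3 issue only $O(n^2k)$ oracle calls in total.
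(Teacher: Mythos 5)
Your proposal is correct and follows essentially the same structure as the paper's proof: establish the invariant $Q_x$ by induction on $d(x)$ (base case $d(x)\ge d(t)-k$ handled by A2, inductive case via the pigeonhole on levels $\{d(x)+1,\dots,d(x)+k+1\}$), split the path at the unique vertex $y$ on a light level, and bound the length of the prefix by $2k+1$. The only differences are presentational: you count up/down/flat edges to obtain the pigeonhole bound where the paper counts vertices per level, you make explicit the argument that $P_1$ cannot rise above level $d(y)$ and $P_2$ cannot drop below it (which the paper compresses into ``by construction''), and you note that the ``only if'' direction of $Q_x$ holds without the window restriction on $\ell$, a point the paper's backward-direction argument uses implicitly; these are all sound and slightly tighten the exposition without changing the route.
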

\begin{proof}
  The running time of A is polynomially bounded since breadth-first search can
  be used to discover all partial graphs $G_{[x,y]}$ and $G_{[x,\infty)}$, and
  we loop at most over every pair of vertices in A2 and A3.
  For the parameter bound, note that the queries in A2 and A3 are for paths of
  length at most $2k$ and $2k+1$, respectively.
  It remains to prove the correctness.

  We execute algorithm A on an instance $(G,s,t,k)$.
  For the correctness, it suffices to prove that property~$Q_s$ holds at the end
  of the execution:
  Note that $\ell$ with $\ell=d_G(s,t)+k$ lies in the interval
  $[d(t)-d(s),d(t)-d(s)+k]$ since $d(s)=0$ and $d(t)=d_G(s,t)$ holds.
  Moreover, we have $G_{[s,\infty)}=G$.
  Thus $Q_s$ guarantees that $\ell\in T[s]$ holds if and only if $G$ contains an
  $(s,t)$-path of length~$\ell$,
  which by step A4 implies that A accepts if and only if $(G,s,t,k)$ is a
  yes-instance of \probEXDet.
  Therefore it remains to prove that~$Q_s$ holds at the end of the execution of~A.
  We do so using the following claim.

  \textit{Claim:}
  For all $x$ with $0\le d(x)\le d(t)$, property~$Q_x$ holds forever after the
  entry $T[x]$ is written to for the last time.

  We prove this claim by induction on $d(x)$.
  For the base case, let $x$ be a vertex with $d(x)\geq d(t)-k$.
  The entry $T[x]$ is only written to in step A2.
  To prove that $Q_x$ holds after~A2, let $\ell$ be an integer with
  $d(t)-d(x)\le \ell \le d(t)-d(x)+k$.
  Note that $d(t)-d(x)\ge 0$ and $d(t)-d(x)+k \le d(t)-\paren*{d(t)-k}+k\le 2k$
  holds, and so step A2 adds $\ell$ to $T[x]$ if and only if the
  graph~$G_{[x,\infty)}$ contains an $(x,t)$-path of length~$\ell$.
  Therefore,~$Q_x$ holds forever after~A2 has been executed.

  For the induction step, let $x$ be a vertex with $d(x)<d(t)-k$.
  By the induction hypothesis,~$Q_y$ holds for all~$y$ with $d(y)>d(x)$.
  The entry $T[x]$ is only written to in step~A3b, and when it is first written
  to, the outer $d$-loop in A3 has fully processed all layers larger
  than~$d(x)$.
  Thus already when~$T[x]$ is written to for the first time, $Q_y$ holds for
  all~$y$ with $d(y)>d(x)$.
  Let $T$ be the table right after A3b writes to $T[x]$ for the last time.
  It remains to prove that~$T[x]$ satisfies $Q_x$.
  Let $\ell$ be an integer with $d(t)-d(x)\le \ell\le d(t)-d(x)+k$.

  \textit{Claim:} There is an $(x,t)$-path of length~$\ell$ if and only if
  $T[x]$ contains~$\ell$.

  For the forward direction, let $P$ be an $(x,t)$-path in $G_{[x,\infty)}$ of
  length exactly~$\ell$.
  There are exactly~$\ell$ vertices~${u\in V(P)\setminus\set{x}}$.
  Moreover, since every edge $uv\in E(P)$ satisfies $\abs{d(u)-d(v)}\le 1$,
  every $d\in\set{d(x)+1,\dots,d(t)}$ must have some vertex $u\in V(P)$ with
  $d(u)=d$.
  Since $\ell\le d(t)-d(x) + k$, there are at most $k$ distinct~$d$ where more
  than one vertex~$u\in V(P)$ satisfies $d(u)=d$.
  By the pigeon hole principle, there exists an integer~$d$ in the
  $(k+1)$-element set $\set{d(x)+1,\dots,d(x)+k+1}$ such that there is exactly
  one vertex $y\in V(P)$ with $d(y)=d$.

  Let $P_{[x,y]}$ be the subpath of $P$ between~$x$ and~$y$, and let $\ell'$ be
  its length.
  By construction, $P_{[x,y]}$ is an $(x,y)$-path in~$G_{[x,y]}$.
  Moreover, we have $\ell' \le \ell - \paren*{d(t)-d(y)}$
  since~$V(P)\setminus\set{x}$ contains $\ell$ vertices~$u$, at least
  $d(t)-d(y)$ of which satisfy~$d(u)>d(y)$.
  By choice of~$\ell$ and~$y$, we obtain~$\ell'\le d(y)-d(x)+k\le 2k+1$.
  For this setting of $y$ and $\ell'$, step A3a detects the path $P_{[x,y]}$
  and~$\ell'$ is added to the set~$L$.
  The second piece $P_{[y,\infty]}$ of the path $P$ is a $(y,t)$-path in
  $G_{[y,\infty]}$ of some length~$\ell''$ between $d(t)-d(y)$ and
  $d(t)-d(y)+k$; since $Q_y$ holds when~A3b is executed for $x$ and $y$, the
  set~$T[y]$ contains~$\ell''$, and so $\ell=\ell'+\ell''$ gets added to $T[x]$.
  Since elements never get removed from~$T[x]$, the forward direction of the
  claim holds.

  For the backward direction of the claim, assume that $T[x]$ contains~$\ell$.
  This means that~$\ell$ is added in step~A3b during the execution of the
  algorithm; in particular, consider the variables $y\in V(G)$, $\ell'\in L$,
  and $\ell''\in T[y]$ when $\ell=\ell'+\ell''$ is added to $T[x]$.
  By the induction hypothesis, $\ell''\in T[y]$ implies that there is a
  $(y,t)$-path in~$G_{[y,\infty)}$ of length~$\ell''$.
  Moreover, $\ell'$ was set in A3a in such a way that there is an $(x,y)$-path
  of length~$\ell'$ in the graph~$G_{[x,y]}$.
  Combined, these two paths yield a single $(x,t)$-path in~$G_{[x,\infty)}$ of
  length~$\ell$.
  The backward direction of the claim follows.
\end{proof}

The randomized algorithm of Björklund et al.~\cite{Bjorklund2017119} is for a variant of
\probEXKPath where the terminal vertices~$s$ and~$t$ are not given, that is, any
path of length exactly~$k$ yields a \texttt{YES}-instance.
Their algorithm applies to our problem as well, with the same running time.
We sketch an argument for this observation here.
Recall that the idea is to reduce the problem to checking whether a certain
polynomial is identically zero -- this polynomial is defined by summing over all
possible labelled walks of length~$k$ (see
\cite[Sec.~10.4.3]{cygan2015parameterized}).
We modify the polynomial by adding two leaf-edges, one incident to~$s$ and one
to~$t$, and restricting our attention only to $(k+2)$-walks that contain these
two edges.
The required information for such walks can still be computed efficiently as
before.
The crux of the proof is that walks that are not paths cancel out over a field
of characteristic two; this argument works by a local re-orientation of segments
of the walk -- an operation that does not change the vertices of the walk and
must therefore keep~$s$ and~$t$ fixed.
The graph $G$ contains a $k$-path if and only if the polynomial is not
identically zero; this property remains true in our case.
The rest of the argument goes through as before, so the algorithm of Björklund et al.~applies to~\probEXKPath with no significant loss in the running time.

The deterministic algorithm of Zehavi~\cite{Zehavi14} also does not expect the
terminal vertices to be given, but this algorithm works for the weighted version
of the problem.
In the weighted $k$-path problem, we are given a graph~$G$, weights~$w_e$ on
each edge, a number~$k$, and a number~$W$, and the question is whether there is
a path of length exactly~$k$ such that the sum of all edge weights along the
path is at most~$W$.
We observe the following simple reduction from \probEXKPath (with terminal
vertices~$s$ and~$t$) to the weighted $k$-path problem (without terminal
vertices):
Every edge gets assigned the same edge weight~$2$, except for the new leaf-edges
at~$s$ and~$t$, which get edge weight~$1$.
Now every path with exactly~$k+2$ edges has weight at most~$W=2k+2$ if and only
if the first and the last edges of the path are the leaf-edges we added.
Due to this reduction, Zehavi's algorithm applies to \probEXKPath with no
significant loss in the running time.

Theorem~\ref{thm: exact detour algorithm} follows from Algorithm A by using
either the algorithm of Björklund et al.~\cite{Bjorklund2017119} or
Zehavi~\cite{Zehavi14} as the oracle.
We remark that Theorem~\ref{thm: exact detour algorithm} and Algorithm~A apply
to directed graphs as well, in which case an algorithm for \probEXKPath in
directed graphs needs to be used (color coding yields the fastest randomized
algorithm~\cite{AlonYZ}, while Zehavi's deterministic algorithm also applies to
directed graphs).

\section{Search-to-decision reduction}\label{sec:searchtodecision}

Our graph-minor based algorithm for \probLongDet does not directly construct a good
path since the algorithm merely says ``yes'' when the tree width is large
enough.
Similarly, our dynamic programming algorithm uses an algorithm for \probEXKPath
as a subroutine, and these algorithms also do not typically find a path
directly.

In this section, we present a search-to-decision reduction for \probLongDet and
\probEXDet that uses a simple downward self-reducibility argument.
In the interest of brevity, we focus on \probLongDet:
Given a decision oracle for this problem, we show how to construct a detour with
only polynomial overhead in the running time.

\begin{algor}{B}{Search-to-decision reduction}{Given $(G,s,t,k)$ and access to
    an oracle for \probLongDet, this algorithm computes an $(s,t)$-path of
    length at
    least $\dist_G(s,t)+k$.}
\item[B0] (Trivial case)
  If $(G,s,t,k)$ is a no-instance of \probLongDet, halt and reject.
\item[B1] (Add a new shortest path)
  Add $d:=\dist_G(s,t)$ new edges to $G$, forming a new shortest $(s,t)$-path $p_1,\dots,p_d$.
\item[B2] (Delete unused edges)
  For each $e\in E(G)\setminus \set{p_1,\dots,p_d}$:
      If $(G-e,s,t,k)$ is a yes-instance of \probLongDet, then set $G:= G-e$.
\item[B3] (Delete the added path)
  Let $G:=G-\set{p_1,\dots,p_d}$.
\item[B4] (Output detour)
  Now $G$ is an $(s,t)$-path of length at least $\dist_G(s,t)+k$.
\end{algor}
\begin{lemma}
  Algorithm B is a polynomial-time algorithm when given oracle access to
  \probLongDet,
  and it outputs a path of length at least~$\dist_G(s,t)+k$.
\end{lemma}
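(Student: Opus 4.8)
The plan is to verify the running time directly and to establish correctness through a loop invariant for step~B2, after which the ``essential edge'' property that holds at the end of B2 will force the surviving edges to form exactly a long $(s,t)$-path. The running time is immediate: steps~B0, B1, B3, B4 are polynomial, and B2 performs one oracle query and at most one edge deletion for each of the (at most) $\Oh(m)$ edges present after B1; so the algorithm runs in polynomial time and makes $\Oh(m)$ oracle calls. If $(G,s,t,k)$ is a no-instance, B0 rejects, so assume it is a yes-instance. We may also assume $k\ge 1$, since for $k=0$ the instance is a yes-instance exactly when $s$ and $t$ are connected, and then any shortest $(s,t)$-path is a valid output. Write $d=\dist_G(s,t)$ for the distance in the \emph{original} input graph, and let $G'$ denote the current value of the variable $G$.

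Next I would establish two invariants that hold from the end of B1 onward, throughout and after B2. (i)~\emph{The distance between $s$ and $t$ in $G'$ is exactly $d$.} After B1 the new edges $p_1,\dots,p_d$ form an $(s,t)$-path of length $d$, so $\dist_{G'}(s,t)\le d$; moreover the internal vertices of this added path have degree $2$ and their incident edges are never deleted by B2, so any $(s,t)$-path of $G'$ that meets such a vertex must traverse the whole added path and hence equal it, while any other $(s,t)$-path uses only original edges and therefore has length $\ge d$. Thus $\dist_{G'}(s,t)=d$ after B1, and since B2 only deletes edges outside the added path, this persists. (ii)~\emph{$G'$ is a yes-instance of \probLongDet.} This holds right after B1 because the long $(s,t)$-path witnessing the original yes-instance survives, and B2 deletes an edge only when the resulting graph is still a yes-instance; so the invariant is preserved. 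Combining (i) and (ii), at every stage from the end of B1 on, $G'$ contains an $(s,t)$-path of length at least $d+k$.

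Finally I would analyse the graph after B2. By~(ii) fix an $(s,t)$-path $P$ in $G'$ with $\ell(P)\ge d+k$. Let $e$ be any original edge still present in $G'$. When B2 considered $e$, the graph $G'-e$ was a no-instance; and this is inherited by the final $G'-e$, because that graph is a subgraph of the one considered then, both contain the added path and hence (as in~(i)) have $s$--$t$ distance $d$, and deleting edges cannot lengthen an $(s,t)$-path, so the final $G'-e$ still has no $(s,t)$-path of length $\ge d+k$. Therefore $P$ must use $e$, and we get $E(P)\supseteq E(G')\setminus\set{p_1,\dots,p_d}$. Conversely, if $P$ used any edge $p_i$ then, as in~(i), $P$ would have to equal the added path, giving $\ell(P)=d<d+k$ since $k\ge1$, a contradiction; hence $E(P)=E(G')\setminus\set{p_1,\dots,p_d}$. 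Thus after B3 deletes $p_1,\dots,p_d$ the remaining edges are precisely $E(P)$: the graph $G$ consists of the $(s,t)$-path $P$ together with possibly some isolated vertices, and discarding those yields an $(s,t)$-path of length $\ell(P)\ge d+k=\dist_G(s,t)+k$, as required.

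The step I expect to need the most care is the last one: concluding that B3 leaves a \emph{single} path rather than merely a graph that contains one. This hinges on both halves of the identity $E(P)=E(G')\setminus\set{p_1,\dots,p_d}$ --- that every surviving original edge is forced onto the witness path by the ``no-instance'' verdicts recorded in B2, and that the witness path avoids the auxiliary path, which uses the degree-$2$ gadget vertices together with $k\ge1$ --- and it crucially relies on the auxiliary path of B1 pinning $\dist_{G'}(s,t)$ to $d$, so that the predicate ``$G'$ is a yes-instance'' continues to mean ``there is an $(s,t)$-path of length $\ge d+k$'' at every stage of the reduction.
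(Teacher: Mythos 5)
Your proof is correct and takes essentially the same approach as the paper's: use the auxiliary path added in B1 to pin the $s$--$t$ distance at $d$, argue that B2 keeps $G'$ a yes-instance, and conclude from edge-minimality that the surviving original edges coincide with the edge set of a witness detour. Where the paper's proof tersely asserts (without justification) that ``deleting any edge from $E(G)\setminus\{p_1,\dots,p_d\}$ would turn it into a no-instance'' and that minimality forces $E(G_1)=\{p_1,\dots,p_d\}\cup\{q_1,\dots,q_\ell\}$, you supply the missing reasoning: the no-instance verdict recorded when B2 examined an edge $e$ is inherited by the final graph minus $e$ precisely because the auxiliary path keeps the distance (and hence the detour threshold) unchanged under the monotone sequence of deletions, and the degree-$2$ internal vertices of the auxiliary path keep the witness path off it. You also flag the $k=0$ corner case, which is worth noting: as written, Algorithm~B would then delete every original edge in B2 and terminate with no path, so a WLOG restriction to $k\geq 1$ (or an explicit $k=0$ branch) is genuinely needed and is silently assumed by the paper.
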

\begin{proof}
  It is clear that B runs in polynomial time; we only need to show correctness.
  Let $G_0$ be the graph at the beginning of the algorithm, let $G_1$ be the
  remaining graph after B2.
  If $(G_0,s,t,k)$ is a yes-instance, then $(G_1,s,t,k)$ is also a yes-instance.
  Moreover, deleting any edge from $E(G)\setminus\set{p_1,\dots,p_d}$ would turn
  it into a no-instance.
  Since $(G_1,s,t,k)$ is a yes-instance, it contains an $(s,t)$-path
  $q_1,\dots,q_\ell$ for $\ell\ge \dist_G(s,t)+k$.
  Since the yes-instance is minimal, we have
  $E(G_1)=\set{p_1,\dots,p_d}\cup\set{q_1,\dots,q_\ell}$.

  Finally, since $p_1,\dots,p_d$ got added to $G$ as a new path, it is
  edge-disjoint from every other $(s,t)$-path in~$G$. Therefore, by removing
  $\set{p_1,\dots,p_d}$ from $G_1$, we get the path $q_1,\dots,q_\ell$, of
  length at least $\dist_G(s,t)+k$.
\end{proof}

\section{Conclusion}

We conclude with the following open problem: what is the complexity of
\probLongDet in directed graphs?  So far, our attempts  to mimic the algorithm
for undirected graphs did not work.  By the celebrated work of Kawarabayashi and
Kreutzer~\cite{KawarabayashiK15},  every directed graph of sufficiently large
directed treewidth contains a large directed grid as a butterfly minor. It is
tempting to use this theorem in order to obtain a Win/Win algorithm for
\probLongDet on directed graphs; however, there are several obstacles on this
path.
Actually we do not know if  the problem is in the class XP, that is, if there is
an algorithm that solves directed \probLongDet in time $n^{f(k)}$ for some
function~$f$.
Can one even find an $(s,t)$-path of length $\ge d_G(s,t)+1$
in polynomial time?

For undirected planar graphs, by standard bidimensionality
arguments \cite{DemaineFHT05jacm}, our algorithm can be sped up to run in time
$2^{\Oh(\sqrt{k})} n^{\Oh(1)}$, but we do not know if \probLongDet in directed
planar graphs is in XP.

\subparagraph*{Acknowledgments.}
We thank
Daniel Lokshtanov,
Meirav Zehavi,
Petr Golovach,
Saket Saurabh,
Stephan Kreutzer, and
Tobias Mömke
for helpful discussions and answers.

\bibliographystyle{plainurl}
\bibliography{detours}

\appendix

\section{Unsatisfiability of the rerouting inequality systems}
\label{sec: LP unsatisfiable}

In this section, we verify manually that the three systems of linear inequalities established in the proof of Lemma~\ref{lem: detour-in-K4} are unsatisfiable, as claimed.
To this end, we interpret each system as a linear program and verify that its respective dual program is unbounded. From LP duality, it then follows that the primal program is infeasible.

\renewcommand{\vec}[1]{\mathbf{#1}}
\subsection{Preliminaries from LP theory}
Recall that a linear program $\mathcal{P}$ in standard form is specified by an objective vector $\vec c \in \Q^n$, a matrix $A \in \Q^{m \times n}$, and a bound vector $\vec b \in \Q^m$.
A vector $\vec x \in \Q^n$ is \emph{feasible} for $\mathcal{P}$ if $A \vec x
\leq \vec b$ holds coordinate-wise.  The program $\mathcal{P}$ then asks for a
feasible vector $\vec x \in \Q^n$ that maximizes the inner product $\vec c ^ T
\vec x$.
Such a vector does not necessarily exist since~$\mathcal{P}$ may fall into one
of the following degenerate cases:
\begin{itemize}
\item There may be no feasible vector for $\mathcal{P}$ at all. In this case, we say that $\mathcal{P}$ is \emph{infeasible}.
\item The function $\vec c^T \vec x$ may attain arbitrarily large values for feasible $\vec x$. In this case, we say that $\mathcal{P}$ is \emph{unbounded}.
\end{itemize}
These two degenerate cases are intimately linked by the theory of LP duality:
The dual of~$\mathcal{P}$, denoted by $\mathcal{D}(\mathcal{P})$, asks for a vector $\vec y \in \Q^m $ that minimizes $\vec b ^ T \vec y$ subject to $A^T \vec y = \vec c$ and $\vec y \geq 0$. It is known that, if $\mathcal{D}(\mathcal{P})$ is unbounded, then $\mathcal{P}$ is infeasible \cite{schrijver2002combinatorial}.

\subsection{Proving unboundedness of the duals} Given a system of linear inequalities $A \vec
x\leq \vec b$, we define a linear program $\mathcal{P}$ by endowing the system
with the objective vector $\vec c = \vec 0$. We show that $\mathcal{P}$ is
infeasible by proving $\mathcal{D}(\mathcal{P})$ to be unbounded. To do so, it
suffices to exhibit a vector $\vec{y^*}$ with $\vec{y^*} \geq \vec 0$ and
$A^T \vec{y^*} = \vec 0$ and $\vec b ^ T \vec{y^*} < 0$. The multiples $\alpha
\cdot \vec{y^*}$ with $\alpha > 0$ are then feasible as well, and they attain
arbitrarily small objective values; thus, $\vec{y^*}$ is a witness of the fact
that $\mathcal{D}(\mathcal{P})$ is unbounded.

To find such vectors $\vec{y^*}$, we take a closer look at the systems of
inequalities that appear in the proof of Lemma~\ref{lem: detour-in-K4}.
We transform these inequalities into normal form, and observe that $\vec b$ has
non-zero entries only at rows generated by the inequalities (\ref{eq: no long
  path}) that act as upper bounds on the path lengths.  Furthermore, these entries are all equal to~$-1$.
To prove infeasibility of the primal program, it thus suffices to find a
vector $\vec{y^*}\ge 0$ that assigns a non-zero value to at least one variable corresponding
to a primal constraint from~(\ref{eq: no long path}).

For each of the three cases in Lemma~\ref{lem: detour-in-K4}, we exhibit such
vectors $\vec{y^*}$.
To improve legibility, we display these vectors as linear combinations of the
involved rows and omit rows whose corresponding coefficient is zero.
Furthermore, we abbreviate expressions like $\ell(b_1 u)$ to $\ell_{1,u}$.
Note that each of the three listed linear combinations indeed
\begin{itemize}
\item involves only inequalities from the respective case,
\item is a feasible solution to the dual because it evaluates to the zero vector
  and involves only non-negative coefficients, and
\item proves the unboundedness of the dual because it assigns a positive
  coefficient to some row generated by the set of inequalities (\ref{eq: no long
    path}) in the respective case.
\end{itemize}

We found these solutions
in a bleary-eyed state using the computer algebra system MATLAB, but this is
irrelevant, as their correctness can be verified immediately by hand.

\paragraph*{Case \ref{uv a}:}
\[
\begin{array}{rrrrrrrrrrrrr}
  \vec{y^*}\cdot A= & 2 & \cdot \quad ( & \textcolor{lightgray}{0} & k &
  \textcolor{lightgray}{0}  & \textcolor{lightgray}{0} &
  \textcolor{lightgray}{0} & \textcolor{lightgray}{0} & \textcolor{lightgray}{0}
  & \textcolor{lightgray}{0} & - \ell_{3,4} & \enspace )\\
  & +1 & \cdot \quad ( & - d & - k & \ell_{1,u}  & \ell_{2,v} &
  \textcolor{lightgray}{0} & \ell_{1,4} & \ell_{2,3} & \textcolor{lightgray}{0}
  & \ell_{3,4} & \enspace )\\
  & +1 & \cdot \quad ( & - d & - k & \ell_{1,u}  & \ell_{2,v} & \ell_{1,3} &
  \textcolor{lightgray}{0} & \textcolor{lightgray}{0} & \ell_{2,4} & \ell_{3,4}
  & \enspace )\\
  & +1 & \cdot \quad ( & d & \textcolor{lightgray}{0} & - \ell_{1,u}  & -
  \ell_{2,v} & - \ell_{1,3} & \textcolor{lightgray}{0} & - \ell_{2,3} &
  \textcolor{lightgray}{0} & \textcolor{lightgray}{0} & \enspace )\\
  & +1 & \cdot \quad ( & d & \textcolor{lightgray}{0} & - \ell_{1,u}  & -
  \ell_{2,v} & \textcolor{lightgray}{0} & - \ell_{1,4} &
  \textcolor{lightgray}{0} & - \ell_{2,4} & \textcolor{lightgray}{0} & \enspace
  )
\end{array}
\]
For clarity, we remark that $\vec{y^*}$ has five non-zero entries: one is~$2$
and four are~$1$.
The dual variable set to~$2$ in the first line corresponds to
the equation $\ell(b_3b_4)\ge k$, which is a constraint from~\eqref{eq:
  tetrasubdiv}.
The second and third line correspond to the
constraints of type \eqref{eq: no long path} for the paths~$ub_1b_4b_3b_2v$ and~$ub_1b_3b_4b_2v$.
The fourth and fifth line correspond to the constraints of type \eqref{eq: shortest path}
for the paths~$ub_1b_3b_2v$ and~$ub_1b_4b_2v$.

Finally, note that $\vec{y^*}\cdot A=0$ and $\vec b^T\vec{y^*} = -2$ hold.
The latter follows since the entries of~$\vec b$ are equal to~$-1$ at
inequalities~\eqref{eq: no long path}, and zero otherwise.

\paragraph*{Case \ref{uv b}:}
\[
\begin{array}{rrrrrrrrrrrrrrr}
  &\vec{y^*}\cdot A = \\
  & 2 &\cdot \quad ( & \textcolor{lightgray}{0} & k & \textcolor{lightgray}{0} &
  \textcolor{lightgray}{0} & \textcolor{lightgray}{0} & \textcolor{lightgray}{0}
  & \textcolor{lightgray}{0} & - \ell_{2,3} & \textcolor{lightgray}{0} &
  \textcolor{lightgray}{0} & \enspace )\\
  & +1 & \cdot \quad ( & - d & - k & \ell_{1,u} & \textcolor{lightgray}{0} &
  \textcolor{lightgray}{0} & \ell_{3,v} & \ell_{1,4} & \ell_{2,3} & \ell_{2,4} &
  \textcolor{lightgray}{0} & \enspace )\\
  & +1 & \cdot \quad ( & - d & - k & \textcolor{lightgray}{0} & \ell_{2,u} &
  \ell_{1,v} & \textcolor{lightgray}{0} & \ell_{1,4} & \ell_{2,3} &
  \textcolor{lightgray}{0} & \ell_{3,4} & \enspace )\\
  & +1 & \cdot \quad ( & d & \textcolor{lightgray}{0} & \textcolor{lightgray}{0}
  & - \ell_{2,u} & - \ell_{1,v} & \textcolor{lightgray}{0} & - \ell_{1,4} &
  \textcolor{lightgray}{0} & - \ell_{2,4} & \textcolor{lightgray}{0} & \enspace
  )\\
  & +1 & \cdot \quad ( & d & \textcolor{lightgray}{0} & - \ell_{1,u} &
  \textcolor{lightgray}{0} & \textcolor{lightgray}{0} & - \ell_{3,v} & -
  \ell_{1,4} & \textcolor{lightgray}{0} & \textcolor{lightgray}{0} & -
  \ell_{3,4} & \enspace )
\end{array}
\]

\paragraph*{Case \ref{uv c}:}
\[
\begin{array}{rrrrrrrrrrrrr}
  \vec{y^*}\cdot A = & 2 & \cdot \quad ( & \textcolor{lightgray}{0} & k &
  \textcolor{lightgray}{0} & \textcolor{lightgray}{0} & \textcolor{lightgray}{0}
  & \textcolor{lightgray}{0} & - \ell_{1,3} & \textcolor{lightgray}{0} &
  \textcolor{lightgray}{0}  & \enspace )\\
  & +1 & \cdot \quad ( & - d & - k & \textcolor{lightgray}{0} & \ell_{2,u} &
  \textcolor{lightgray}{0} & \ell_{4,v} & \ell_{1,3} & \ell_{1,4} & \ell_{2,3}
  & \enspace )\\
  & +1 & \cdot \quad ( & - d & - k & \ell_{1,u} & \textcolor{lightgray}{0} &
  \ell_{3,v} & \textcolor{lightgray}{0} & \ell_{1,3} & \textcolor{lightgray}{0}
  & \textcolor{lightgray}{0}  & \enspace )\\
  & +1 & \cdot \quad ( & d & \textcolor{lightgray}{0} & - \ell_{1,u} &
  \textcolor{lightgray}{0} & \textcolor{lightgray}{0} & - \ell_{4,v} &
  \textcolor{lightgray}{0} & - \ell_{1,4} & \textcolor{lightgray}{0}  & \enspace
  )\\
  & +1 & \cdot \quad ( & d & \textcolor{lightgray}{0} & \textcolor{lightgray}{0}
  & - \ell_{2,u} & - \ell_{3,v} & \textcolor{lightgray}{0} &
  \textcolor{lightgray}{0} & \textcolor{lightgray}{0} & - \ell_{2,3}  & \enspace
  )
\end{array}
\]

\end{document}